\documentclass{article}

\usepackage{microtype}
\usepackage{graphicx}
\usepackage{amsmath, amsfonts, amssymb, amsthm}
\usepackage{subcaption}
\usepackage{booktabs} 
\usepackage{xcolor} 
\usepackage{afterpage}

\usepackage{listings}
\usepackage{listings}
\usepackage{xcolor}
\usepackage{tabularx}
\usepackage{enumitem}
\usepackage{url}
\usepackage{ragged2e} 

\definecolor{successgreen}{RGB}{0, 120, 0}
\definecolor{failred}{RGB}{180, 0, 0}
\definecolor{toolblue}{RGB}{0, 0, 150}
\definecolor{textgray}{RGB}{80, 80, 80}
\definecolor{expertpurple}{RGB}{100, 0, 100}
\definecolor{warningorange}{RGB}{200, 100, 0}
\definecolor{imagepurple}{RGB}{100, 0, 100}

\newcolumntype{L}{>{\raggedright\arraybackslash}X}

\definecolor{lightgray}{gray}{0.95}

\usepackage[T1]{fontenc}
\usepackage{lmodern} 
\usepackage{tcolorbox}
\tcbuselibrary{breakable}

\newtcolorbox{promptbox}{
    colback=lightgray,
    colframe=black,
    sharp corners,
    boxrule=0.8pt,
    width=\linewidth,
    before=\par\noindent,
    breakable,
    fontupper=\normalsize\rmfamily
}
\newtheorem*{lemma*}{Lemma}

\usepackage{hyperref}

\usepackage{algorithm}
\usepackage{algorithmic}

\usepackage[preprint]{icml2026}

\usepackage{amsmath}
\usepackage{amssymb}
\usepackage{mathtools}
\usepackage{amsthm}

\usepackage[capitalize,noabbrev]{cleveref}

\theoremstyle{plain}
\newtheorem{theorem}{Theorem}[section]

\newtheorem{lemma}[theorem]{Lemma}

\theoremstyle{definition}

\newtheorem{assumption}[theorem]{Assumption}
\theoremstyle{remark}
\newtheorem{remark}[theorem]{Remark}

\usepackage[textsize=tiny]{todonotes}

\icmltitlerunning{MonoScale: Scaling Multi-Agent System with Monotonic Improvement}

\begin{document}

\twocolumn[
  \icmltitle{MonoScale: Scaling Multi-Agent System with Monotonic Improvement}



  \icmlsetsymbol{equal}{*}

  \begin{icmlauthorlist}
    \icmlauthor{Shuai Shao}{equal,sjtu}
    \icmlauthor{Yixiang Liu}{equal,sjtu}
    \icmlauthor{Bingwei Lu}{sjtu}
    \icmlauthor{Weinan Zhang}{sjtu,sii}
  \end{icmlauthorlist}

  \icmlaffiliation{sjtu}{Shanghai Jiao Tong University, Shanghai, China}
  \icmlaffiliation{sii}{Shanghai Innovation Institute, Shanghai, China}

  \icmlcorrespondingauthor{Weinan Zhang}{wnzhang@sjtu.edu.cn}

  \icmlkeywords{Machine Learning, ICML}

  \vskip 0.3in
]



\printAffiliationsAndNotice{}  

\begin{abstract}
In recent years, LLM-based multi-agent systems (MAS) have advanced rapidly, using a router to decompose tasks and delegate subtasks to specialized agents. A natural way to expand capability is to \textbf{scale up the agent pool} by continually integrating new functional agents or tool interfaces, but naive expansion can trigger \textbf{performance collapse} when the router cold-starts on newly added, heterogeneous, and unreliable agents. We propose \textbf{MonoScale}, an expansion-aware update framework that proactively generates a small set of agent-conditioned familiarization tasks, harvests evidence from both successful and failed interactions, and distills it into auditable natural-language memory to guide future routing. We formalize sequential augmentation as a contextual bandit and perform trust-region memory updates, yielding a monotonic non-decreasing performance guarantee across onboarding rounds under a non-interfering expansion assumption. Experiments on GAIA and Humanity's Last Exam show stable gains as the agent pool grows, outperforming naive scale-up and strong-router fixed-pool baselines. Our code is available \href{https://github.com/ShaoShuai0605/MonoScale}{here}.
\end{abstract}

\section{Introduction}

Multi-agent systems (MAS) built on large language models (LLMs) are emerging as a powerful paradigm for solving complex tasks \cite{chen2025surveyllmbasedmultiagentsystem, yang2024llmbasedmultiagentsystemstechniques, guo2024largelanguagemodelbased, liao2026positionagenticaiforeseeable}. Unlike a single model that produces an answer end-to-end, an MAS typically includes a router that decomposes an input problem, dispatches subtasks to specialized agents \cite{hu2025owloptimizedworkforcelearning, yang2025agentnetdecentralizedevolutionarycoordination, yue2025masrouterlearningroutellms}(e.g., retrieval, code execution, planning, tool use, or domain experts), and then aggregates their outputs into a final response. This modular ``divide--and--conquer'' structure enables flexible collaboration and makes MAS particularly effective for multi-step workflows, tool-augmented reasoning, and long-horizon task execution.

\afterpage{%
\begin{figure}[t]
    \centering
    \includegraphics[width=\columnwidth]{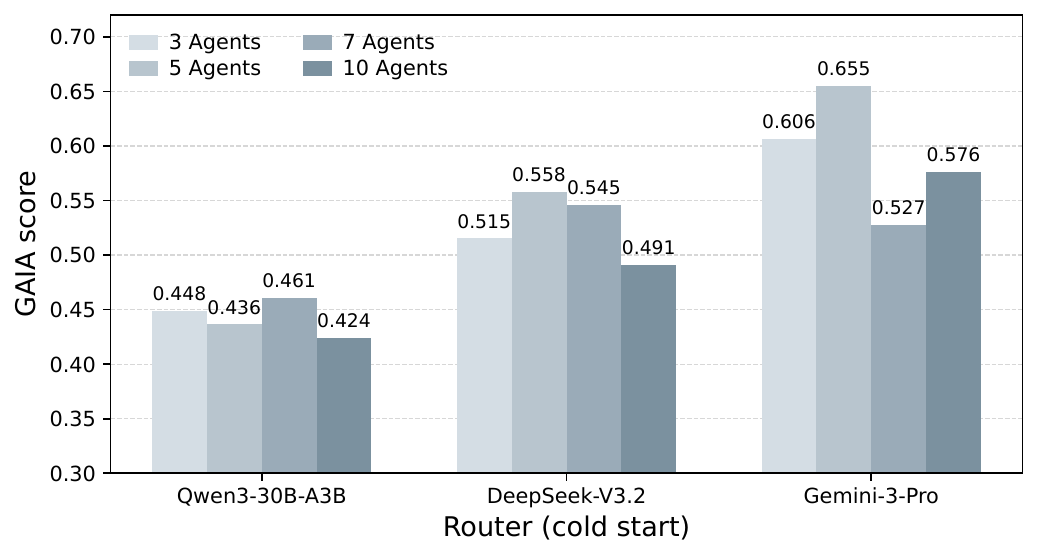}
    \caption{Motivation for \textbf{MonoScale}: without an expansion-aware familiarization and update mechanism, routers cold-start on newly added agents, causing unstable routing decisions and potential performance collapse during scale-up. Bars show GAIA scores for different agent pool sizes ($N\in\{3,5,7,10\}$) under cold-start routers (Qwen3-30B-A3B, DeepSeek-V3.2, and Gemini-3-Pro).}
    \label{fig:motivation_coldstart_gaia_total}
\end{figure}
}

In real deployments, MAS are often continuously evolving: new tools, plugins, external APIs, or specialized agents may be integrated over time, leading to a dynamically expanding agent pool \cite{kim2025sciencescalingagentsystems,patternsforbuildingascalablemas}. While adding a stronger agent or one that covers new skills should, in principle, increase the system's capability ceiling, naive scale-up in practice can yield the opposite outcome---overall performance may stagnate, degrade, or even exhibit a clear \emph{performance collapse}. This failure mode is not merely hypothetical: \cref{fig:motivation_coldstart_gaia_total} shows that, on GAIA under cold-start routers, increasing the agent pool size can hurt end-to-end performance (e.g., DeepSeek-V3.2 drops from 0.558 at 5 agents to 0.491 at 10; Qwen3 peaks at 0.461 with 7 agents but falls to 0.424 at 10). A key reason is the router's \emph{cold start}: once a new agent $a_{\text{new}}$ becomes available, the router typically lacks grounded knowledge of what the agent is good at, when it is reliable, what its boundaries and failure modes are, or how to interact with its tools (e.g., interfaces, output formats, and frequent errors). If the router assigns tasks to $a_{\text{new}}$ too early or too aggressively, mis-routing decisions can be amplified at the system level and reduce end-to-end reward---for example, sending high-precision requests to an unstable retrieval agent, delegating complex tool calls to an agent unfamiliar with the interface, or introducing a brittle link in a multi-step workflow that causes the entire chain to fail.

Existing approaches for improving multi-agent coordination broadly fall into two lines, yet neither fully addresses the central challenge of \emph{open-ended} MAS scaling. The first line focuses on a \textbf{static agent pool}, assuming that the agent set and its capability profile remain stable between training and deployment \cite{yue2025masrouterlearningroutellms, zhang2024cutcrapeconomicalcommunication, liu2025rcrrouterefficientroleawarecontext}; the research emphasis is then on optimizing routing policies, improving division of labor, or planning more efficiently within a fixed pool. While effective in static settings, these methods typically lack a controlled update mechanism that can \emph{prevent performance regressions} when new agents are continually added, and they do not provide a unified formalization for \emph{online updating under continual agent augmentation} in open-ended systems. In parallel, some recent work has started to examine \textbf{dynamic expansion} in LLM routing \cite{wang2025iclrouterincontextlearnedmodel, zhang2025avengerssimplerecipeuniting}, where the system routes across an enlarging set of available experts over time. However, scaling an MAS is substantially harder: newly integrated \emph{agents} are far more heterogeneous (tools, memory, skills, roles, and interfaces), and their quality can be highly uneven in practice (brittle tool calls, API failures, retrieval noise, or unstable behaviors). As a result, the router faces a sharper \textbf{cold start} problem, and a fixed onboarding suite to ``trial-run'' the new agent is often insufficient: (i) it cannot be adapted to each agent's unique boundaries and failure modes; and (ii) it may under-cover the true deployment distribution, creating a risk of agents that appear usable during onboarding but still collapse after real deployment. In short, what dynamic MAS scaling truly requires is an \textbf{adaptive} (agent-conditioned) and \textbf{conservative} update protocol, so that end-to-end performance can \emph{improve stably} as the agent pool grows.

To bridge this gap, we propose \textbf{an update framework for sequentially expanding open-ended MAS}. Whenever a new agent is introduced, rather than waiting for real user traffic to surface failures, we proactively customize a small set of \textbf{familiarization tasks} tailored to the new agent and use them to collect controlled feedback. Our mechanism has two stages: (i) \textbf{active familiarization via task customization}, where tasks are synthesized to probe the new agent's likely strengths, interface constraints, and failure modes, enabling the router to learn \emph{when to use / when not to use} the agent; and (ii) \textbf{router update via memory}, where we update the router's policy by maintaining an auditable text memory distilled from experience. Concretely, the memory captures not only patterns behind \emph{successful} orchestration, but also lessons from \emph{failed} worker calls and mis-routings; we abstract these signals into a compact set of actionable principles (e.g., capability boundaries, safe/unsafe contexts, interface caveats, and common failure patterns) that guide future routing decisions. By continuously learning these routing principles as the agent pool evolves, the router becomes less prone to brittle dispatching caused by unfamiliarity with newly integrated agents, enabling end-to-end performance to scale more stably with additional agents. On the theory side, we cast orchestration under sequential augmentation as a contextual bandit and prove that, under a non-interfering expansion assumption, our trust-region memory optimization yields a monotonic non-decreasing performance guarantee across onboarding rounds.

We systematically evaluate the sequential expansion process of open multi-agent systems on the GAIA \cite{mialon2023gaiabenchmarkgeneralai} validation set and the multiple-choice subset of Humanity's Last Exam \cite{phan2025humanity}. We compare: (i) naive scale-up: using a Qwen-3-30B-A3B-Instruct router, which simply adds more agents without task customization or memory updates; results show that performance not only fails to improve, but instead exhibits a clear collapse as the number of agents continues to grow; (ii) multi-agent systems with SOTA models as routers (e.g., GPT-5); and (iii) our expansion-aware protocol, which includes customized familiarization tasks and natural-language memory updates. Our results demonstrate that this approach substantially reduces the risk of post-expansion performance collapse and yields more stable, sustained improvements across multiple onboarding rounds. Notably, with task customization and memory, the Qwen-3-30B-A3B-Instruct router benefits from scaling up the agent pool and can even outperform the ``strong-router + fixed agent pool" baseline, indicating that adaptive router updates—rather than backbone strength alone—are crucial for robust MAS scaling. In addition, we further evaluate robustness under a noisy agent pool setting, i.e., the pool contains malfunctioning agents, and compare our method against strong-router baselines: even the strongest model (Gemini-3-pro) can suffer a performance collapse when faced with a noise-heavy agent pool, whereas our method remains stable and maintains MAS performance even under such noise.

\paragraph{Contributions.}
Our main contributions are three-fold:
\begin{itemize}
    \item \textbf{Problem formalization:} we formally define MAS scaling under sequential expansion, and introduce a unified model and notation for analyzing cold-start-induced performance collapse.
    \item \textbf{Method and protocol:} we propose an expansion-aware task customization (data synthesis) strategy and a router update protocol that actively elicits evidence about each newly added agent's profile, and updates the router via auditable, controllable, and rollbackable natural-language memory editing.
    \item \textbf{Theoretical guarantee:} under a contextual bandit formulation, we prove that, under a non-interfering expansion assumption, our trust-region memory optimization yields a monotonic non-decreasing performance guarantee during scaling, providing a verifiable conditional lower-bound for open MAS expansion.
\end{itemize}

\section{Related Works}

\paragraph{LLM-based multi-agent systems and orchestration.} LLM-based multi-agent systems (MAS) \cite{yang2025agentnetdecentralizedevolutionarycoordination, chen2025surveyllmbasedmultiagentsystem, yang2024llmbasedmultiagentsystemstechniques} typically rely on a central router or coordinator to decompose tasks \cite{hong2024metagptmetaprogrammingmultiagent, li2023camelcommunicativeagentsmind, wu2023autogenenablingnextgenllm}, dispatch subtasks to specialized agents \cite{qian2024chatdevcommunicativeagentssoftware, hu2025owloptimizedworkforcelearning} (e.g., retrieval, tool use, code execution), and aggregate results \cite{wang2023selfconsistencyimproveschainthought, wang2024mixtureofagentsenhanceslargelanguage}, enabling tool-augmented and long-horizon workflows. In this setting, a large body of work studies how to improve planning, division of labor, and routing decisions under a static agent pool \cite{yue2025masrouterlearningroutellms, zhang2024cutcrapeconomicalcommunication, liu2025rcrrouterefficientroleawarecontext}, \emph{i.e.,} the agent set and its capability profile are assumed fixed between training and deployment. Such approaches are effective when the action space is stable, but they usually do not provide a controlled \emph{online updating under continual agent augmentation} mechanism that prevents regressions once the agent pool changes over time.

\paragraph{Dynamic MAS evolution and the cold-start problem.} Recent work explores self-evolving LLM-based multi-agent systems, where teams, roles, or interaction topologies are generated, optimized, or reorganized automatically \cite{yang2025agentnetdecentralizedevolutionarycoordination, zhuge2024languageagentsoptimizablegraphs, zhang2025aflowautomatingagenticworkflow, lu2025morphagentempoweringagentsselfevolving}. Such internal adaptation can improve a MAS, but unconstrained self-evolution may drift into emergent risks \cite{shao2026agentmisevolveemergentrisks}, so the evolutionary capability and reliability of agents must be explicitly evaluated rather than assumed \cite{fu2026catarenaevaluatingevolutionarycapabilities, yang2025riosworldbenchmarkingriskmultimodal}. Crucially, these approaches also do not address dynamic scale-up---continuously integrating new specialist or generalist agents into an existing system. A related line studies system-level expansion by routing over growing pools of heterogeneous models \cite{wang2025iclrouterincontextlearnedmodel, zhang2025avengerssimplerecipeuniting}: UniRoute routes queries to previously unseen LLMs via learned model embeddings \cite{jitkrittum2025universalmodelroutingefficient}, and StageRoute performs online deployment and routing over a streaming model inventory under capacity and cost budgets \cite{li2026nearoptimalonlinedeploymentrouting}. However, these address \emph{model-level} routing, where the decision unit is a single standalone LLM. Agent routing in MAS is harder: agents are far more heterogeneous (tools, memory, skills, roles) and less reliable in practice (tool/API failures, brittle interfaces, retrieval noise), so routing must jointly weigh capability and reliability and produce a multi-step orchestration plan rather than a single-model choice. Closest to our setting, EvoRoute performs experience-driven self-routing within an agentic system \cite{zhang2026evorouteexperiencedrivenselfroutingllm}, but targets the cost, latency, and accuracy trade-off rather than preventing cold-start collapse during agent-pool expansion. These methods are thus complementary to MonoScale rather than directly applicable baselines.

\paragraph{Agent learning.}
A large body of work improves LLM-based agents via diverse forms of learning. One direction maintains explicit memory modules that distill long-term experience---trajectories, tool-use outcomes, and failure cases---for reuse in later decisions \cite{zhang2026memrlselfevolvingagentsruntime,zhou2025mementofinetuningllmagents,zhang2025learnmemorizeoptimizingllmbased, ouyang2025reasoningbankscalingagentselfevolving, wang2026museagentmultimodalreasoningagent}. Another line applies reinforcement learning and its variants to improve agent behavior \cite{shao2024deepseekmathpushinglimitsmathematical,su2026enhancingagenticrlprogressive, dong2025agenticreinforcedpolicyoptimization, liao2025marftmultiagentreinforcementfinetuning, zhang2025looptoolclosingdatatrainingloop}; recent \emph{training-free} variants such as Training-Free GRPO \cite{cai2025trainingfreegrouprelativepolicy} move optimization from weight updates to inference-time, experience-conditioned policy adjustments. A further line distills modular, reusable \emph{skills} from experience that can be retrieved or composed for new tasks \cite{zhang2026mmskillsmultimodalskillsgeneral, xia2026skillrlevolvingagentsrecursive, lu2026skill0incontextagenticreinforcement}. More broadly, classic ideas on conservative, stable policy updates remain relevant for agent learning \cite{schulman2017trustregionpolicyoptimization, vieillard2020deepconservativepolicyiteration, laroche2019safepolicyimprovementbaseline}. We model MAS orchestration as a contextual bandit and update the router via memory during expansion, mitigating performance collapse from agent augmentation while preserving existing capabilities.

\begin{figure*}[t]
    \centering
    \includegraphics[width=\textwidth]{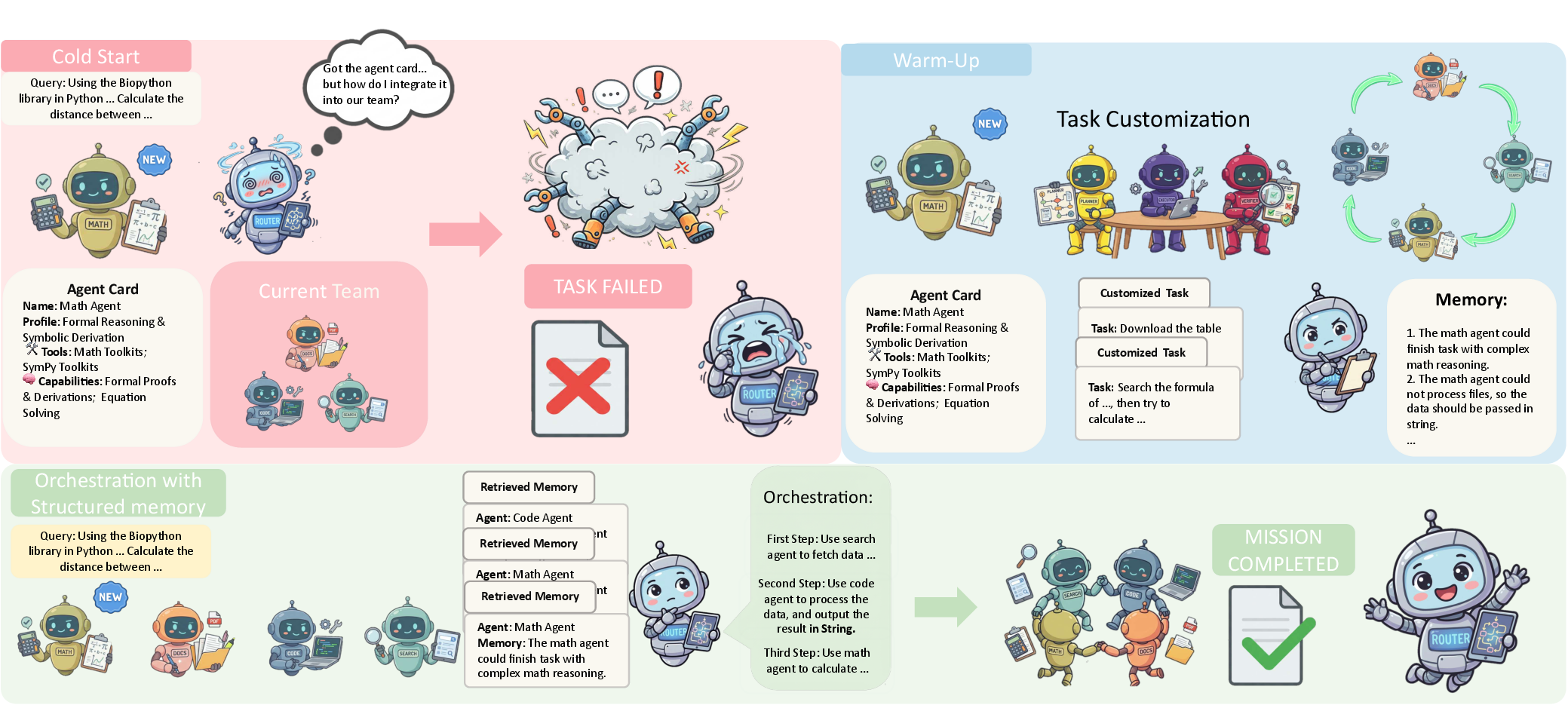}
    \caption{Overview of our expansion-aware familiarization-and-memory-update protocol. After adding a new agent, we generate customized warm-up tasks (conditioned on agent cards), collect both success and failure traces under the current router, distill the evidence into structured routing principles (memory candidates), and select a safe memory update with a conservative fallback to prevent brittle mis-routing during scale-up.}
    \label{fig:overview}
\end{figure*}

\section{Scaling Multi-Agent System with Monotonic Improvement}
\label{sec:method}

In this section, we study the sequential augmentation of open-ended Multi-Agent Systems (MAS) and present an expansion-aware router update framework. We first formalize orchestration under an evolving agent pool and articulate the monotonicity challenge induced by the expanding action space. We then develop a memory-based policy improvement procedure with conservative updates, and defer the algorithmic instantiation and theoretical analysis to the subsequent subsections.

\subsection{Preliminaries}
\label{sec:preliminaries}

\subsubsection{Contextual Bandit at Expansion Stage $k$}
We model routing in an evolving multi-agent system as a sequence of contextual bandit problems indexed by the expansion stage $k\in\{0,1,2,\dots\}$.
At stage $k$, the system has an agent pool $\mathcal{S}_k$, which induces an action space $\mathcal{Y}_k:=\mathcal{Y}(\mathcal{S}_k)$ of feasible orchestration plans (routing schemes).

In each interaction round at stage $k$:
(i) a task context $x\sim\mathcal{D}$ is drawn i.i.d.\ from a fixed deployment distribution,\footnote{We assume the task distribution is stationary across expansions; the expansion only changes the available agent pool and thus the feasible routings and their outcomes.}
(ii) the router selects an orchestration plan $y\sim \pi(\cdot\mid x)$ with $y\in\mathcal{Y}_k$,
(iii) executing $y$ with agent pool $\mathcal{S}_k$ yields a \emph{verifiable} reward $r_k(x,y)\in[0,1]$, as determined by an automatic evaluator.
The stage-$k$ objective is
\begin{equation}
J_k(\pi)
:=\mathbb{E}_{x\sim\mathcal{D}}\ \mathbb{E}_{y\sim\pi(\cdot\mid x)}\big[r_k(x,y)\big].
\label{eq:Jk_def}
\end{equation}

\subsubsection{Router Policy with Editable Memory}
The MAS is coordinated by a central \textbf{Router} that maps a task $x$ to an orchestration plan $y$.
We adopt a parameter-efficient setup where the router is a frozen LLM whose behavior is modulated by an editable memory $m$ (a text buffer or a list of text entries) under a token-length budget:
\begin{equation}
m\in\mathcal{M}, \qquad \mathrm{len}(m)\le K .
\end{equation}
Conditioning the frozen LLM on the task $x$, the memory $m$, and the descriptions of the currently available agents (e.g., agent cards for $\mathcal{S}_k$) induces a stochastic routing policy
\begin{equation}
\pi_{m}^{\,k}(y\mid x) := p_{\mathrm{LLM}}(y\mid x,m;\mathcal{S}_k),
\qquad y\in\mathcal{Y}_k.
\label{eq:pi_mk_def}
\end{equation}
Thus, at each stage $k$, policy optimization reduces to selecting a memory state $m$ within $\mathcal{M}$.

\subsubsection{Sequential Augmentation and Conservative Embedding}
We consider \textbf{sequential augmentation}: starting from an initial agent pool $\mathcal{S}_0$, each expansion step adds a new agent $a_k$,
\begin{equation}
\mathcal{S}_{k}=\mathcal{S}_{k-1}\cup\{a_k\},
\qquad
\mathcal{Y}_{k-1}\subseteq \mathcal{Y}_k,
\label{eq:actionspace_nesting}
\end{equation}
where the nesting reflects that any plan that only calls agents in $\mathcal{S}_{k-1}$ remains a feasible plan after adding $a_k$.

A naive update after expansion may degrade performance due to mis-routing to the unfamiliar new agent.
Our goal is \textbf{monotonic improvement across expansions}, i.e.,
\begin{equation}
J_k(\pi_k)\ \ge\ J_{k-1}(\pi_{k-1}),\qquad \forall k\ge 1,
\label{eq:cross_k_monotone_goal}
\end{equation}
where $\pi_k$ denotes the deployed router policy at stage $k$.

To make \eqref{eq:cross_k_monotone_goal} well-defined despite the action-space change, we introduce a conservative embedding of the pre-expansion policy into the post-expansion action space.
Given any policy $\pi_{k-1}$ defined on $\mathcal{Y}_{k-1}$, define its \emph{conservative lift} $\pi_{k-1}^{\uparrow}$ on $\mathcal{Y}_k$ by
\begin{equation}
\pi_{k-1}^{\uparrow}(y\mid x):=
\begin{cases}
\pi_{k-1}(y\mid x), & y\in \mathcal{Y}_{k-1},\\
0, & y\in \mathcal{Y}_k\setminus \mathcal{Y}_{k-1}.
\end{cases}
\label{eq:policy_lift}
\end{equation}

\paragraph{Backward-compatible expansion.}
We assume expansions are \emph{non-interfering}: adding a new agent only enlarges the set of available routing options, while leaving the execution and evaluation of any plan that does not invoke the new agent unchanged.
Under this assumption, the lifted policy preserves performance:
\begin{equation}
J_k(\pi_{k-1}^{\uparrow}) = J_{k-1}(\pi_{k-1}).
\label{eq:J_lift_equal}
\end{equation}
Therefore, it suffices to ensure the post-expansion router improves upon the conservative baseline at the new stage,
\begin{equation}
J_k(\pi_k)\ \ge\ J_k(\pi_{k-1}^{\uparrow}),
\label{eq:stagek_vs_conservative}
\end{equation}
which directly implies the cross-expansion monotonicity goal \eqref{eq:cross_k_monotone_goal}.
We assume access to a conservative fallback that routes only among previously available agents (i.e., actions in $\mathcal{Y}_{k-1}$). In practice, this is implemented by disabling the newly added agent in the router interface (e.g., omitting an agent for any task provided).

\subsection{Expansion-Aware Familiarization and Memory Update}
\label{sec:memory_optimization}

As illustrated in \cref{fig:overview}, each expansion step consists of two stages: \emph{expansion-aware task synthesis} and an \emph{evidence-driven memory update}.
When a new agent $a_k$ is integrated, we first synthesize warm-up tasks \emph{conditioned on the new agent's profile} to probe both strengths and failure modes, then execute these tasks under the current router to collect success/failure evidence, and finally distill the evidence into auditable routing principles as structured memory candidates, from which we select a safe update (with a conservative fallback) to mitigate cold-start mis-routing and support stable scaling as the agent pool grows.

\subsubsection{Expansion-Aware Task Synthesis}

Our design draws on recent work on open-world agent trajectory synthesis \cite{guo2024largelanguagemodelbased, fang2025cognitivekernelproframeworkdeep, shi2025taskcraftautomatedgenerationagentic}, but adapts the philosophy from benchmark construction to generating a warm-up task distribution centered on new agents for safe Router updates.

For each worker agent in the system, including newly added agents, we maintain a concise \emph{agent card} that summarizes its functional capabilities, available tools, and known behavioral characteristics. When a new agent is introduced, its agent card is provided jointly with the agent cards of existing workers as conditional inputs to a dedicated task synthesizer. The synthesizer itself follows a step-level \emph{planner--executor--validator} loop. Specifically, the planner proposes candidate task specifications, the executor then attempts to carry out the proposed task under the current multi-agent configuration, producing a complete multi-agent interaction process; finally, the validator examines the execution outcome to ensure that the task admits a deterministic solution, follows a coherent execution path, and effectively exposes the capability boundaries or potential mismatch behaviors of the new agent. Only tasks that pass this validation procedure are retained as interaction samples for the subsequent Router familiarization stage.

Following this procedure, after the $k$-th system expansion we obtain a set of expansion-aware synthesized tasks, denoted as $\mathcal{T}_k^{\mathrm{syn}}$, which induces a corresponding warm-up task distribution $\mathcal{D}_k^{\mathrm{warm}}$. These synthesized tasks are not used to train or update individual worker agents. Instead, they are used to drive the Router, under a fixed policy, to perform batch interactions and sampling. The execution outcomes of all synthesized tasks—including both successful and failed routing cases—are aggregated and subsequently used to support memory updates and policy improvement modeling of the Router in the following stage.

\subsubsection{Memory Update}

During the familiarization stage, the Router interacts with tasks sampled from the warm-up distribution under the current memory state $m_t$: for $x\sim \mathcal{D}_k^{\mathrm{warm}}$, it samples an orchestration plan $y\sim \pi_{m_t}(\cdot \mid x)$, executes it, and receives a scalar reward $r(x,y)$. We design $\mathcal{D}_k^{\mathrm{warm}}$ to closely match the deployment task distribution $\mathcal{D}$ in task form and execution constraints, while amplifying contexts that involve the new agent. Consequently, we treat $\mathcal{D}_k^{\mathrm{warm}}$ as a practical proxy of $\mathcal{D}$ and estimate the expectations in our objective using samples from $\mathcal{D}_k^{\mathrm{warm}}$.

The resulting evidence set (including both successful and failed routing cases) is aggregated to propose candidate memory edits. Concretely, we summarize the warm-up interactions into an experience buffer
$\mathcal{B}_k=\{(x_i,y_i,r_i)\}_{i=1}^{N_k}$,
and distill auditable routing principles from $\mathcal{B}_k$ to form a set of candidate memory states (see full schema in Appendix~\ref{app:memory_schema}).
Let $\tilde{\mathcal U}_k(m_t)\subseteq\mathcal{M}$ denote the \emph{evidence-induced} candidate set produced from $\mathcal{B}_k$.
During candidate construction, we enforce a \emph{semantic} trust region: newly distilled entries must be compatible with existing routing principles (e.g., via explicit applicability conditions, priorities, or exception clauses). Candidates with irreconcilable conflicts are rejected and thus do not enter $\tilde{\mathcal U}_k(m_t)$.
Note that $\tilde{\mathcal U}_k(m_t)$ can be empty in degenerate cases where no consistent candidate can be distilled from the evidence.


To ensure the update step always has a safe option, we always include a \emph{conservative fallback} in the candidate set. Concretely, let $m_t^{\uparrow}$ denote the memory state obtained by augmenting $m_t$ with a hard prompt constraint that forbids invoking the newly added agent $a_k$ (e.g., ``do not call agent $a_k$''). This constraint restricts the router to plans in $\mathcal{Y}_{k-1}$ and thus implements the lifted policy $\pi_{k-1}^{\uparrow}$.
We then define the final feasible update set as
\begin{equation}
\mathcal{U}_k(m_t) := \tilde{\mathcal U}_k(m_t)\cup\{m_t^{\uparrow}\},
\end{equation}
so the optimization can always fall back to the pre-expansion behavior during familiarization.

\paragraph{Trust-region surrogate objective.}
We then directly adopt the surrogate objective of TRPO to evaluate candidate memories.
In the contextual bandit setting, given a baseline memory $m_t$, we evaluate a candidate memory $m$ using
\begin{equation}
\mathcal{L}_{m_t}(m)
:=
J(\pi_{m_t})
+\mathbb{E}_{x\sim\mathcal{D}}\Big[\sum_{y\in\mathcal{Y}_k}\pi_m(y|x)\,A_{m_t}(x,y)\Big],
\end{equation}
and perform the constrained memory update:
\begin{equation}
m_{t+1} \in
\arg\max_{m\in\mathcal{U}(m_t)}\ \mathcal{L}_{m_t}(m)
\quad\text{s.t.}\quad
\bar D_{\mathrm{KL}}(\pi_{m_t}\|\pi_m)\le \delta.
\end{equation}
Intuitively, memory edits that strongly contradict the current routing principles tend to induce large behavioral shifts; they are either filtered out by the semantic consistency check (thus not included in $\tilde{\mathcal U}_k(m_t)$) or become infeasible under the KL trust region.
If no candidate satisfies the trust-region constraint, the optimizer falls back to the identity option, yielding $m_{t+1}=m_t$ (a conservative no-update step).

\paragraph{Practical instantiation.}
The constrained objective above serves as a \emph{theoretical lens} for the stability analysis in Section~\ref{sec:theory}: in deployment we do not continuously optimize the router, but instead perform a one-shot conservative selection over a finite candidate set. Concretely, with the router frozen, we synthesize agent-conditioned warm-up tasks, collect success and failure rollouts into $\mathcal{B}_k$, distill a finite set of candidate memories, and select a safe update that always retains the conservative fallback $m_t^{\uparrow}$. Algorithm~\ref{alg:memory_update} in Appendix~\ref{app:update_procedure} details the full procedure.

\section{Theoretical Analysis}
\label{sec:theory}

In this section, we provide a monotonic non-decreasing performance guarantee \emph{across expansion stages} for our memory update protocol under the contextual bandit formulation and a non-interfering expansion assumption (Assumption~\ref{assump:non_interfering}).

\subsection{Exact Surrogate Objective in Contextual Bandits}

Fix an expansion stage $k$ and a baseline memory $m_0$, which induces a routing policy $\pi_{m_0}^{k}$ over $\mathcal{Y}_k$ and reward $r_k$.
Define the baseline value and advantage:
\begin{align}
V_{m_0}^{k}(x) &:= \mathbb{E}_{y\sim\pi_{m_0}^{k}(\cdot\mid x)}\!\big[r_k(x,y)\big],\\
A_{m_0}^{k}(x,y) &:= r_k(x,y)-V_{m_0}^{k}(x).
\end{align}

\begin{lemma}[Exact Bandit Surrogate]
\label{lem:exact_surrogate}
For any memory $m\in\mathcal{M}$, the stage-$k$ performance satisfies
\begin{equation}
    \begin{aligned}
    J_k(\pi_m^{k})
    &=
    J_k(\pi_{m_0}^{k})
    +\mathbb{E}_{x\sim\mathcal{D}}
    \Big[\sum_{y\in\mathcal{Y}_k}\pi_m^{k}(y\mid x)\,A_{m_0}^{k}(x,y)\Big] \\
    &=: \mathcal{L}^{k}_{m_0}(m).
    \end{aligned}
    \end{equation}    
\end{lemma}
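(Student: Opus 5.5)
The plan is a direct computation, since in a contextual bandit there is no state distribution shift: the context distribution $\mathcal{D}$ does not depend on the policy. So the TRPO-style identity should hold exactly, with no approximation term. I would start from the right-hand side and show it collapses to $J_k(\pi_m^{k})$.

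\textbf{Step 1: expand the advantage sum.} Fix $x$ and substitute the definition $A_{m_0}^{k}(x,y)=r_k(x,y)-V_{m_0}^{k}(x)$:
\begin{equation*}
\sum_{y\in\mathcal{Y}_k}\pi_m^{k}(y\mid x)\,A_{m_0}^{k}(x,y)
=\sum_{y\in\mathcal{Y}_k}\pi_m^{k}(y\mid x)\,r_k(x,y)-V_{m_0}^{k}(x)\sum_{y\in\mathcal{Y}_k}\pi_m^{k}(y\mid x).
\end{equation*}
The second sum equals $1$, because $\pi_m^{k}(\cdot\mid x)$ is a probability distribution supported on $\mathcal{Y}_k$ by \eqref{eq:pi_mk_def}. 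The first term is $\mathbb{E}_{y\sim\pi_m^{k}(\cdot\mid x)}[r_k(x,y)]$.

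\textbf{Step 2: take expectations and cancel.} Take $\mathbb{E}_{x\sim\mathcal{D}}$. By \eqref{eq:Jk_def} the first term becomes $J_k(\pi_m^{k})$. The second term becomes $\mathbb{E}_{x\sim\mathcal{D}}[V_{m_0}^{k}(x)]=J_k(\pi_{m_0}^{k})$, again by \eqref{eq:Jk_def}. Adding $J_k(\pi_{m_0}^{k})$ therefore yields exactly $J_k(\pi_m^{k})$, which is the claim. The defining equality $=:\mathcal{L}^{k}_{m_0}(m)$ is just notation.

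\textbf{Main obstacle: justifying the manipulations.} This is a measure-theoretic point rather than a conceptual one. Because $r_k\in[0,1]$, every quantity is bounded, so linearity of expectation and exchanging the finite or countable sum over $\mathcal{Y}_k$ with $\mathbb{E}_x$ are justified by dominated convergence or Tonelli. If $\mathcal{Y}_k$ is uncountable, I would replace the sum by an integral against $\pi_m^{k}(\cdot\mid x)$, and the same argument goes through. I would also remark on two further points. First, the identity holds for any $m$ with no trust-region or KL condition, since $\mathcal{D}$ is fixed. Second, the reward $r_k$ and the action space must be the same stage-$k$ objects on both sides, which is why the statement fixes $k$.
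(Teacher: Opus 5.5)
Your proof is correct and follows the same route as the paper. Both substitute $A_{m_0}^{k}=r_k-V_{m_0}^{k}$, use $\sum_{y\in\mathcal{Y}_k}\pi_m^{k}(y\mid x)=1$ to collapse the baseline term to $V_{m_0}^{k}(x)$, then take $\mathbb{E}_{x\sim\mathcal{D}}$ and identify the two pieces as $J_k(\pi_m^{k})$ and $J_k(\pi_{m_0}^{k})$; your boundedness and integrability remarks are a harmless addition that the paper omits.
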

\begin{proof}
See Appendix \ref{app:proof_exact_surrogate}.
\end{proof}

\subsection{Monotonic Improvement Across Expansions}

Let $m_{k-1}$ be the deployed memory at stage $k\!-\!1$, inducing policy $\pi_{m_{k-1}}^{k-1}$ on $\mathcal{Y}_{k-1}$.
After adding a new agent at stage $k$, we consider the conservative lift $\pi_{m_{k-1}}^{\uparrow}$ on $\mathcal{Y}_k$ (cf.\ \eqref{eq:policy_lift}), and realize it via a conservative fallback memory $m_{k-1}^{\uparrow}$ (e.g., by a prompt constraint that forbids invoking the new agent).

\begin{lemma}[Expansion Bridge]
\label{lem:expansion_bridge}
Under the non-interfering expansion assumption (Assumption~\ref{assump:non_interfering}), the conservative lift preserves performance:
\begin{equation}
J_k(\pi_{m_{k-1}}^{\uparrow}) \ =\ J_{k-1}(\pi_{m_{k-1}}^{k-1}).
\end{equation}
\end{lemma}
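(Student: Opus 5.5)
The proof is a direct expansion of $J_k$ on the lifted policy. It splits the sum over $\mathcal{Y}_k$ into the old plans $\mathcal{Y}_{k-1}$ and the new plans $\mathcal{Y}_k\setminus\mathcal{Y}_{k-1}$, and then uses the non-interfering assumption to replace $r_k$ by $r_{k-1}$ on the old plans.

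Concretely, the starting point is the definition \eqref{eq:Jk_def}:
\begin{equation*}
J_k(\pi_{m_{k-1}}^{\uparrow})=\mathbb{E}_{x\sim\mathcal{D}}\Big[\sum_{y\in\mathcal{Y}_k}\pi_{m_{k-1}}^{\uparrow}(y\mid x)\,r_k(x,y)\Big].
\end{equation*}
This is legitimate since the action spaces are finite (or countable) and the rewards lie in $[0,1]$, so the sums and expectations are well defined. The nesting \eqref{eq:actionspace_nesting} gives the disjoint decomposition $\mathcal{Y}_k=\mathcal{Y}_{k-1}\sqcup(\mathcal{Y}_k\setminus\mathcal{Y}_{k-1})$. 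By the lift definition \eqref{eq:policy_lift}, every term indexed by a new plan has weight $0$ and drops out. On $\mathcal{Y}_{k-1}$ the lifted weights coincide with $\pi_{m_{k-1}}^{k-1}(y\mid x)$. A side check is that the lift is a genuine distribution on $\mathcal{Y}_k$: its mass on $\mathcal{Y}_{k-1}$ is $1$ because $\pi_{m_{k-1}}^{k-1}$ is a distribution there.

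Next, Assumption~\ref{assump:non_interfering} is applied in the form $r_k(x,y)=r_{k-1}(x,y)$ for all $x$ and all $y\in\mathcal{Y}_{k-1}$. This is exactly the formal content of ``execution and evaluation of any plan not invoking the new agent are unchanged.'' Substituting gives
\begin{equation*}
\mathbb{E}_{x\sim\mathcal{D}}\Big[\sum_{y\in\mathcal{Y}_{k-1}}\pi_{m_{k-1}}^{k-1}(y\mid x)\,r_{k-1}(x,y)\Big]=J_{k-1}(\pi_{m_{k-1}}^{k-1}).
\end{equation*}
The context distribution $\mathcal{D}$ is the same at both stages by the stationarity footnote, so no reweighting of contexts is needed.

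The one step needing care is the interface between the abstract lift and its realization by the fallback memory $m_{k-1}^{\uparrow}$. The statement is about $\pi_{m_{k-1}}^{\uparrow}$, defined through \eqref{eq:policy_lift}, so the argument above is self-contained. If one also wants the memory-induced policy $\pi^k_{m_{k-1}^{\uparrow}}$ to satisfy the identity, one must assume it coincides with the lift. That is, the prompt constraint must put zero mass on plans invoking $a_k$ and must leave the conditional distribution over old plans unchanged. I would state this explicitly as part of what ``realize'' means, rather than try to derive it from LLM behavior. If the rewards are stochastic, the same argument goes through with $r_k$ read as expected reward, with the assumption interpreted as equality in distribution of outcomes for old plans.
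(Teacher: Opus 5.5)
Your proof is correct and follows the paper's argument: expand $J_k$ over $\mathcal{Y}_k$, drop the zero-weight terms on $\mathcal{Y}_k\setminus\mathcal{Y}_{k-1}$ using the lift definition, and substitute $r_k=r_{k-1}$ on $\mathcal{Y}_{k-1}$ by Assumption~\ref{assump:non_interfering}; in the paper, that assumption also states that $\mathcal{D}$ is identical across stages. Your remark about realizing the lift through $m_{k-1}^{\uparrow}$ is what the paper states separately as Assumption~\ref{assump:fallback_feasible}, which it uses in Theorem~\ref{thm:monotone_across_k} rather than in this lemma.
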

\begin{proof}
See Appendix \ref{app:proof_expansion_bridge}.
\end{proof}

We update memory at stage $k$ via the trust-region objective
\begin{equation}
    \begin{aligned}
    m_k &\in \arg\max_{m\in\mathcal{U}_k(m_{k-1})}\ \mathcal{L}^{k}_{m_{k-1}^{\uparrow}}(m) \\
    \text{s.t.}\quad &\bar D_{\mathrm{KL}}\!\big(\pi_{m_{k-1}^{\uparrow}}^{k}\ \|\ \pi_m^{k}\big)\le \delta.
    \end{aligned}
    \label{eq:stagek_update}
    \end{equation}
    
where the feasible edit set is constructed to always include the fallback option
$m_{k-1}^{\uparrow}\in\mathcal{U}_k(m_{k-1})$.

\begin{theorem}[Monotonicity Across Expansions]
\label{thm:monotone_across_k}
Under Assumption~\ref{assump:non_interfering} (non-interfering expansion) and Assumption~\ref{assump:fallback_feasible} (feasible conservative fallback), let $\{m_k\}_{k\ge 0}$ be produced by \eqref{eq:stagek_update}. Then for all $k\ge 1$,
\begin{equation}
{\ J_k(\pi_{m_k}^{k})\ \ge\ J_{k-1}(\pi_{m_{k-1}}^{k-1})\ }.
\end{equation}
In particular, $J_k(\pi_{m_k}^{k})$ is non-decreasing in $k$.
\end{theorem}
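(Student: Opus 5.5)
The proof chains three facts at each fixed stage $k\ge 1$: exactness of the surrogate (Lemma~\ref{lem:exact_surrogate}), optimality of $m_k$ against a feasible fallback in \eqref{eq:stagek_update}, and the expansion bridge (Lemma~\ref{lem:expansion_bridge}). The target inequality then follows by a short chain of equalities and one inequality, and monotonicity in $k$ by telescoping.

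\textbf{Step 1 (exact surrogate equals true objective).} Apply Lemma~\ref{lem:exact_surrogate} with baseline memory $m_0:=m_{k-1}^{\uparrow}$. For every $m\in\mathcal{M}$ this gives
\[
J_k(\pi_m^k)=\mathcal{L}^k_{m_{k-1}^{\uparrow}}(m).
\]
Maximizing the surrogate is therefore the same as maximizing true stage-$k$ performance over the feasible set.

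\textbf{Step 2 (fallback is feasible, so the optimum dominates it).} By construction $m_{k-1}^{\uparrow}\in\mathcal{U}_k(m_{k-1})$. It also satisfies the trust-region constraint, since
\[
\bar D_{\mathrm{KL}}\big(\pi^k_{m_{k-1}^{\uparrow}}\,\|\,\pi^k_{m_{k-1}^{\uparrow}}\big)=0\le\delta .
\]
Hence the constrained argmax is over a nonempty set, and any maximizer $m_k$ satisfies
\[
\mathcal{L}^k_{m_{k-1}^{\uparrow}}(m_k)\ \ge\ \mathcal{L}^k_{m_{k-1}^{\uparrow}}(m_{k-1}^{\uparrow}).
\]
The right-hand side equals $J_k(\pi^k_{m_{k-1}^{\uparrow}})$, because the advantage term vanishes in expectation under the baseline policy itself: $\sum_y \pi^k_{m_0}(y|x)A^k_{m_0}(x,y)=0$. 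This is also just Step 1 applied with $m=m_0$. Combining with Step 1, $J_k(\pi^k_{m_k})\ge J_k(\pi^k_{m_{k-1}^{\uparrow}})$.

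\textbf{Step 3 (bridge across the expansion).} Assumption~\ref{assump:fallback_feasible} identifies the policy induced by the fallback memory with the conservative lift: $\pi^k_{m_{k-1}^{\uparrow}}=\pi^{\uparrow}_{m_{k-1}}$. Lemma~\ref{lem:expansion_bridge}, under Assumption~\ref{assump:non_interfering}, then gives $J_k(\pi^{\uparrow}_{m_{k-1}})=J_{k-1}(\pi^{k-1}_{m_{k-1}})$. Chaining with Step 2 yields the claim for each $k$. Non-decreasingness in $k$ follows by induction or telescoping over $k=1,2,\dots$.

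\textbf{Main obstacle.} The delicate point is Step 3, specifically the identification of the policy induced by the prompt-constrained memory $m_{k-1}^{\uparrow}$ with the formal lift $\pi^{\uparrow}_{m_{k-1}}$. Two things must hold: the fallback puts zero mass on $\mathcal{Y}_k\setminus\mathcal{Y}_{k-1}$, and it reproduces $\pi^{k-1}_{m_{k-1}}$ on $\mathcal{Y}_{k-1}$ even though the LLM now also sees the new agent's card. I will not derive this from the LLM. I will invoke Assumption~\ref{assump:fallback_feasible} as exactly this statement, or at least as the weaker statement $J_k(\pi^k_{m_{k-1}^{\uparrow}})\ge J_{k-1}(\pi^{k-1}_{m_{k-1}})$, which is all the argument needs.

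A secondary detail is that $\arg\max$ must be attained. The candidate set is finite in the practical instantiation, so this holds; otherwise I would assume attainment or read the update as selecting some feasible $m_k$ whose surrogate value is at least the fallback's.
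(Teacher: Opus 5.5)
Your proposal is correct and follows the paper's proof essentially step for step. The fallback $m_{k-1}^{\uparrow}$ is feasible with zero KL, so optimality gives $\mathcal{L}^{k}_{m_{k-1}^{\uparrow}}(m_k)\ge\mathcal{L}^{k}_{m_{k-1}^{\uparrow}}(m_{k-1}^{\uparrow})$; the exact surrogate turns this into $J_k(\pi^k_{m_k})\ge J_k(\pi^k_{m_{k-1}^{\uparrow}})$; and Assumption~\ref{assump:fallback_feasible} with Lemma~\ref{lem:expansion_bridge} closes the chain. Your side remarks are sound refinements that the paper leaves implicit: the argmax must be attained, and the argument only needs the weaker inequality $J_k(\pi^k_{m_{k-1}^{\uparrow}})\ge J_{k-1}(\pi^{k-1}_{m_{k-1}})$ rather than exact realizability of the lift.
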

\begin{proof}
See Appendix \ref{app:proof_monotone_across_k}.
\end{proof}

\begin{remark}[Scope of the guarantee]
Theorem~\ref{thm:monotone_across_k} is a \emph{conditional} guarantee rather than a universal one: it holds when expansion is non-interfering, i.e., adding a new agent leaves the reward of any plan that does not invoke it unchanged. This assumption may fail under persistent environment drift---e.g., changing external API behavior or availability---where legacy-plan rewards shift for reasons orthogonal to agent expansion; we empirically examine its validity in Appendix~\ref{app:assumption_validation}.
\end{remark}

\begin{table*}[!t]
\caption{Performance comparison on GAIA. For the Qwen-3-30B-A3B-Instruct backbone we report matched \emph{w/o Memory} (naive scale-up) and \emph{w/ Memory} (MonoScale) results at each agent count; green arrows mark the w/ Memory gain over w/o Memory at the same count. MonoScale improves consistently with scale and eventually outperforms comparable strong baselines.}
\label{tab:gaia_comprehensive}
\centering
\begin{footnotesize}
\begin{sc}
\begin{tabular}{lcccc}
\toprule
Model Configuration & Level 1 & Level 2 & Level 3 & Overall \\
\midrule
\textit{Open-Source Baselines} & & & & \\
DeepSeek-V3.2 (3 Agents)   & 64.15\% & 45.35\% & 46.15\% & 51.52\% \\
Qwen3-235B-Instruct (3 Agents) & 58.49\% &47.67\%  &26.92\%  & 47.88\% \\
GLM-4.7 (3 Agents) & 62.26\% & 54.65\% & 42.31\% & \underline{55.15\%} \\
\midrule
\textit{Proprietary Baselines} & & & & \\
GPT-5 (3 Agents)           & 60.38\% & 54.65\% & 42.31\% & 54.55\% \\
GPT-5-High (3 Agents)      & 67.92\% & 41.86\% & 23.08\% & 47.27\% \\
GPT-5.2 (3 Agents)         & 66.04\% & 36.05\% & 46.15\% & 47.27\% \\
Gemini-2.5-Pro (3 Agents)  & 60.38\% & 53.49\% & 46.15\% & 54.55\% \\
Gemini-3-Pro (3 Agents)    & 69.81\% & 62.79\% & 34.62\% & \textbf{60.61\%} \\
\midrule
\textit{Naive Scale-up (Qwen-3-30B-A3B-Instruct, w/o Memory)} & & & & \\
\quad 3 Agents   & 49.06\% & 47.67\% & 26.92\% & 44.84\% \\
\quad 5 Agents   & 54.72\% & 43.02\% & 23.08\% & 43.64\% \\
\quad 7 Agents   & 54.72\% & 40.70\% & 46.15\% & 46.06\% \\
\quad 10 Agents  & 47.17\% & 44.19\% & 26.92\% & 42.42\% \\
\midrule
\textit{MonoScale (Qwen-3-30B-A3B-Instruct, w/ Memory)} & & & & \\
\quad 3 Agents   & 49.06\% & 47.67\% & 26.92\% & 44.84\% \\
\quad 5 Agents   & 52.83\% & 45.35\% & 42.31\% & 47.27\%\,\textcolor{green!60!black}{\(\uparrow\)+8.3\%} \\
\quad 7 Agents   & 60.38\% & 47.67\% & 30.77\% & 49.09\%\,\textcolor{green!60!black}{\(\uparrow\)+6.6\%} \\
\quad 10 Agents  & 60.38\% & 55.81\% & 42.31\% & \underline{55.15\%}\,\textcolor{green!60!black}{\(\uparrow\)+30.0\%} \\
\bottomrule
\end{tabular}
\end{sc}
\end{footnotesize}
\end{table*}

\section{Experiments}
\label{sec:experiments}

In this section, we empirically validate MonoScale along two dimensions: (i) stable scalability, whether performance improves reliably as the agent pool expands without collapse; and (ii) capability, whether an open-weight router achieves competitive end-to-end performance against strong proprietary baselines.

\subsection{Experimental Settings}
\label{subsec:settings}

\paragraph{Benchmarks.}
We primarily utilize the GAIA (General AI Assistants) \cite{mialon2023gaiabenchmarkgeneralai}benchmark, using the full Validation Set to evaluate pragmatic capabilities such as tool use and multi-step planning in open-world scenarios. In addition, to test whether our method and the routing experience accumulated during expansion generalize to harder, more complex deep-reasoning distributions, we include the MCQ subset of HLE (Humanity's Last Exam) \cite{phan2025humanity}as a supplementary evaluation; we choose this subset in part because it supports convenient rule-based automatic scoring.

\paragraph{Agent Pool Configuration.}
We build two dynamic Multi-Agent Systems (MAS) based on the toolkit of camel-ai \cite{li2023camelcommunicativeagentsmind}, both scaling the worker pool from $N=3$ to $N=10$, a range comparable to that explored in \cite{kim2025sciencescalingagentsystems}: \textbf{(1) a Clean Pool}, where all agents are reliable, to test whether MonoScale can maintain stable scale-up performance under noise-free conditions; and \textbf{(2) a Malfunctioning Pool}, where we intentionally inject several \emph{Malfunctioning Agents} exhibiting realistic failure modes during expansion, to evaluate whether MonoScale can stabilize overall MAS performance even in the presence of unreliable workers. In both systems, the initial cohort consists of core generalists, while subsequent additions progressively introduce more specialized domain capabilities. Detailed specifications of all agents and the malfunctioning agents are provided in Appendix~\ref{app:agent_pool}.

\paragraph{Data synthesis and experience collection.}
To customize solvable and easily verifiable familiarization tasks for each newly added agent during expansion, we replicate the task-synthesis paradigm from~\cite{guo2024largelanguagemodelbased} (planner--executor--validator), which ensures that synthesized tasks admit an executable solution path and an automatic verifier. Building on this framework, we additionally condition the planner on the new agent's \emph{agent card}, so that the generated plans prioritize probing the agent’s capability boundaries, interface constraints, and common failure modes. Whenever a new agent is integrated into the MAS, we synthesize approximately 50 expansion-aware warm-up tasks for the router to cold-start on the new agent. In the execution stage, for each warm-up task, we let the router independently sample and execute 4 orchestration plans in parallel; if a task fails in all 4 runs, we filter it out to reduce the impact of unsolvable or overly noisy samples on subsequent updates. For the remaining tasks, we summarize experience separately from successful and failed trajectories: successful traces are used to distill positive routing principles (i.e., \emph{when} to invoke the new agent), while failed traces provide negative evidence (i.e., \emph{when not} to invoke it or when additional constraints are required). Both types of evidence are then used to construct and evaluate memory-update candidates.

\paragraph{Baselines.}
We compare \textbf{MonoScale} against two categories: (1) \textbf{Naive Scale-up}, using the same Qwen3-30B-A3B-Instruct backbone but adding agents without adaptation; and (2) \textbf{Strong Model Baselines}, using fixed-pool configurations orchestrated by top-tier models, including GPT-5, Gemini-3-Pro, and DeepSeek-V3.2. \cite{introducinggpt5, gemini3pro, liu2025deepseek}.

\subsection{Main Results}
\label{subsec:results}

We first examine the scalability of our approach on the GAIA benchmark. As illustrated in Figure~\ref{fig:scaling_trend}, MonoScale demonstrates a clear upward scaling trend, with only a small local dip at $N=6$ that we analyze in Appendix~\ref{app:error_analysis}. Table~\ref{tab:gaia_comprehensive} quantifies this gain: as the agent pool expands from 3 to 10, the Qwen-3 based system steadily improves its overall accuracy from 44.84\% to \textbf{55.15\%}, confirming that the router effectively integrates new capabilities without suffering from the cold-start confusion of expanding action spaces.

\begin{figure}[!t]
    \centering
    \includegraphics[width=0.95\linewidth]{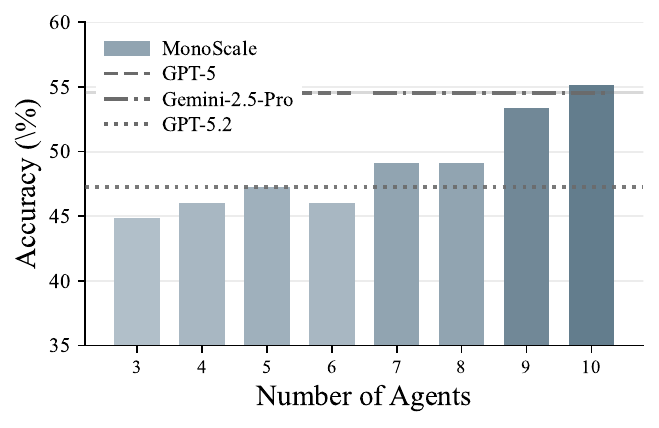}
    \caption{Scaling performance on GAIA as the agent pool grows from $N=3$ to $N=10$. \textbf{MonoScale} consistently improves with more agents by updating routing memory during expansion, mitigating the performance collapse that can occur under naive scale-up.}
    \label{fig:scaling_trend}
\end{figure}

In terms of absolute capability, MonoScale empowers a smaller open-weight model to rival proprietary giants. As shown in the lower section of Table~\ref{tab:gaia_comprehensive}, our 10-agent system (55.15\%) outperforms the 3-agent baseline of \textbf{DeepSeek-V3.2} (51.52\%) and is competitive with \textbf{GPT-5} (54.55\%) and \textbf{Gemini-2.5-Pro} (54.55\%) on GAIA. This advantage extends to expert-level reasoning tasks as well. Table~\ref{tab:horizontal_scaling} shows that on the challenging HLE benchmark, MonoScale improves accuracy from 11.70\% to 19.88\%, surpassing the static GPT-5 baseline (19.63\%). These results demonstrate that MonoScale can enable a scaled MAS to outperform fixed-pool MAS configurations orchestrated by significantly larger SOTA models. Additional results and analysis could be found in Appendix~\ref{app:gaia_comprehensive_results}, an ablation study in Appendix~\ref{app:ablation}, a cross-router memory-transfer study in Appendix~\ref{app:cross_router}, analyses of scaling beyond ten agents, onboarding cost, and multi-seed variance in Appendices~\ref{app:beyond_ten}--\ref{app:variance}, and detailed case studies in Appendix~\ref{app:case_studies}.

\begin{table}[!t]
\caption{Performance comparison on HLE. \textbf{MonoScale} (using Qwen-3) shows consistent improvement with scaling, eventually surpassing the static GPT-5 baseline.}
\label{tab:horizontal_scaling}
\centering
\small
\begin{sc}
\setlength{\tabcolsep}{3pt}
\begin{tabular}{@{}p{0.66\linewidth}c@{}}
\toprule
Method & Accuracy \\
\midrule
\textit{Baselines} & \\
GPT-5 (3 Agents)               & 19.63\% \\
Qwen3-235B-A22B (3 Agents) & 12.09\% \\
\midrule
\textit{Qwen3-30B-A3B MonoScale} & \\
\quad 3 Agents, w/o Memory & 11.70\% \\
\quad 5 Agents (w/ Memory)    & 15.59\%\,\textcolor{green!60!black}{\(\uparrow\)+33.2\%} \\
\quad 7 Agents (w/ Memory)    & 18.71\%\,\textcolor{green!60!black}{\(\uparrow\)+59.9\%} \\
\quad 10 Agents (w/ Memory)   & \textbf{19.88\%}\,\textcolor{green!60!black}{\(\uparrow\)+69.9\%} \\
\bottomrule
\end{tabular}
\end{sc}
\end{table}

Finally, we analyze robustness to unreliable workers. As illustrated in Figure~\ref{fig:robustness_collapse}, naive scale-up without memory updates can exhibit unstable, non-monotonic performance as the number of agents increases (even with a strong router such as \textbf{Gemini-3-Pro}). In contrast, MonoScale maintains a more stable scaling curve by proactively identifying failure modes during familiarization and encoding ``negative constraints'' into memory, which helps isolate malfunctioning agents during inference and improves reliability. Additional results and analysis could be found in Appendix~\ref{app:additional_results_badpool}, and detailed case study could be found in Appendix~\ref{app:case_studies}.

\begin{figure}[!t]
    \centering
    \includegraphics[width=\columnwidth]{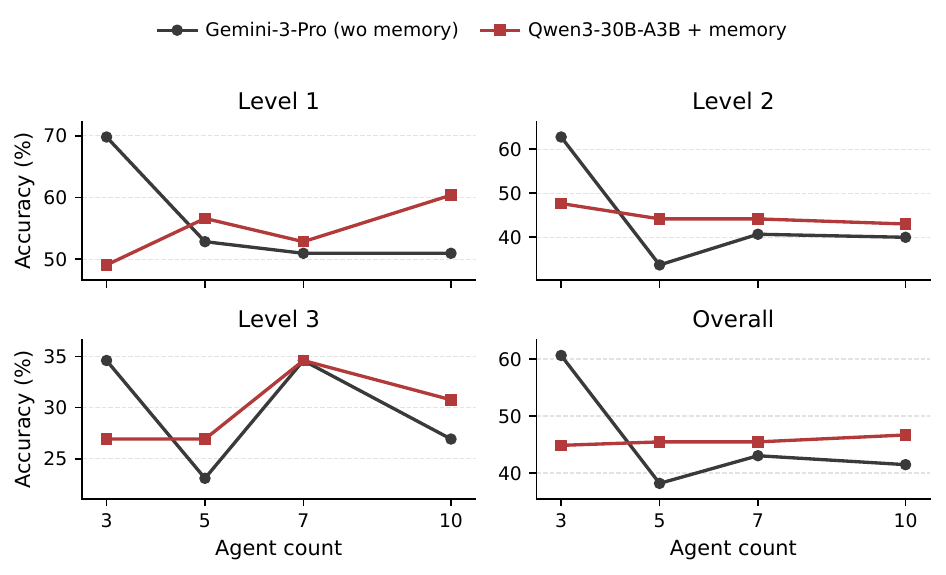}
    \caption{GAIA performance as the Malfunctioning Agent Pool scales from 3 to 10: a level-wise (L1–L3) and overall accuracy comparison. Gemini-3-Pro without memory updates exhibits a clear performance collapse during scaling, while our Qwen3-30B-based MonoScale remains stable and even improves slightly as the pool grows.}
    \label{fig:robustness_collapse}
\end{figure}

\section{Discussion and Limitations}
In the emerging \emph{Agentic Web} \cite{yang2025agenticwebweavingweb}, a router may need to choose among millions of third-party agents uploaded by individuals or organizations. Even with basic quality control, agent cards can be incomplete, outdated, or misaligned with actual execution behavior, making zero-shot routing unreliable; this is especially problematic when tool interfaces are brittle, or tasks require multi-step workflows where a single misrouting can cascade into end-to-end failure. In this setting, simply upgrading to a stronger router model does not fundamentally address the lack of execution-grounded evidence. Instead, treating agent integration as a mandatory warm-up protocol, where we probe new agents with a small set of agent-conditioned tasks and distill both successes and failures into auditable, rollbackable routing constraints, appears essential for preventing post-expansion performance collapse.

A practical consequence of this collapse risk shapes our experimental comparison. We benchmark MonoScale mainly against strong-router baselines at a \emph{fixed} small ($3$-agent) pool rather than against their best naively scaled configuration, because of an asymmetry in predictability: under naive scale-up a strong router's performance is non-monotonic, and which pool size is best can only be identified \emph{in hindsight}, after exhaustively evaluating every scale. A practitioner onboarding new agents thus has no reliable way to know a priori at how many agents such a router will perform well---or where it will collapse---so its fixed small-pool configuration is the only stable, a-priori-knowable operating point, and hence the meaningful reference. MonoScale removes this uncertainty: its conservative updates yield a stable, non-decreasing scaling trend by construction, so its scaled performance is predictable and one can keep onboarding agents without gambling on where the peak lies.

\paragraph{Limitations.}
We note several limitations. First, the monotonic-improvement guarantee is \emph{conditional}: it holds under the non-interfering expansion assumption (Assumption~\ref{assump:non_interfering}) and may weaken under persistent environment drift, such as shifting external-API behavior, as examined empirically in Appendix~\ref{app:assumption_validation}. Second, MonoScale stores routing knowledge in a natural-language memory under a fixed token budget; at the agent-pool sizes we study we observe no saturation-induced collapse or catastrophic interference, but this is not established for much longer expansion horizons, where explicit memory management---retrieval, pruning, or consolidation---would likely be needed.

Due to compute and experimental-scale constraints, our main experiments validate the approach on relatively small agent pools of up to $10$ agents (with a preliminary $20$-agent probe in Appendix~\ref{app:beyond_ten}), where we observe strong effectiveness and stability. In realistic Agentic Web settings, routers face catalogs containing thousands to millions of agents, and routing is typically retrieval-driven: the system must first retrieve a small candidate set from a large directory before selecting and invoking an agent. How to continuously and \emph{in parallel} onboard and expand the agent set at Web scale with a controllable cost budget---while warming up and calibrating routing to avoid degradation during expansion---remains an open challenge. A further question is whether retrieval-based routing should be more tightly integrated into our expansion strategy, forming a budgeted ``retrieve--route--calibrate'' loop that prioritizes calibration for newly surfaced or highly uncertain long-tail agents and feeds the resulting reliability and boundary information back into subsequent retrieval and routing decisions. Designing a more efficient, scalable, and stable expansion mechanism along these lines is a central focus of our future work.

\section{Conclusion}
We study a common failure mode in open multi-agent systems under continual expansion: cold-start misrouting that leads to performance collapse. We propose \textsc{MonoScale}, which, upon each new agent integration, proactively collects execution-grounded evidence via agent-conditioned warm-up probes and distills both successes and failures into auditable, rollbackable natural-language routing memories. On the theory side, we formalize expansion as a contextual bandit problem and establish a stage-wise monotonic non-degradation guarantee under a non-interfering expansion assumption, conservative fallback, and trust-region constraints. Experiments on GAIA and Humanity's Last Exam show that \textsc{MonoScale} delivers stable gains as the agent pool grows from 3 to 10, substantially mitigating the degradation of naive scale-up, and remains more robust in the presence of noisy agents. We hope this work provides a safe and controllable expansion protocol for large-scale, open agent onboarding in the future Agentic Web; we further aim to integrate \textsc{MonoScale} with retrieval-based routing to support million-scale agent ecosystems.

\section*{Impact Statement}
Our goal is to make large language model (LLM)-based multi-agent systems (MAS) more reliable under continual expansion, where new general-purpose or specialist agents are integrated over time. By proactively onboarding newly added agents with \emph{agent-conditioned} warm-up tasks and updating routing memory in a conservative, auditable manner with an explicit fallback option, \textsc{MonoScale} aims to reduce cold-start misrouting and prevent cascading failures in tool-augmented workflows. If deployed responsibly, this approach can improve the robustness of agentic applications in areas such as research assistance, enterprise automation, and software engineering, particularly in settings where systems must evolve frequently and interact with heterogeneous and imperfect external tools.

At the same time, improving the scalability of MAS may amplify broader risks associated with capable agentic systems. First, stronger orchestration can lower the barrier to automating harmful activities (e.g., fraud, cyber abuse, or large-scale misinformation) by coordinating multiple tools and agents. Second, the proposed routing memories may raise privacy and data-retention concerns if they are distilled from interactions containing sensitive user inputs or proprietary information. Third, introducing untrusted third-party agents creates security risks: malicious agents or prompt-injection-style behaviors could poison the evidence collection process and bias memory updates, resulting in unsafe routing decisions.

\textsc{MonoScale} provides partial mitigations---e.g., a conservative fallback mechanism that can disable newly added agents, trust-region constraints that limit abrupt behavioral shifts, and human-auditable and rollbackable natural-language memories---but it is not a complete safety solution. Practical deployments should additionally sandbox and permission newly added tools/agents, incorporate adversarial and security-focused onboarding tests, filter or redact sensitive information prior to memory distillation, and monitor for anomalous routing patterns.


\bibliography{reference}
\bibliographystyle{icml2026}

\newpage
\appendix
\onecolumn

\section{Agent Pool Configuration}
\label{app:agent_pool}

This section details the full configuration of the agents used in our experiments, including the System Prompts, Task Prompts, and Agent Cards (Descriptions) for both the Router and Worker Agents.

\subsection{Router Configuration}

The high-level coordination is managed by a central \textbf{Router} through three distinct execution phases: the \textbf{Planning Phase} decomposes user queries into subtasks, the \textbf{Coordinating Phase} assigns subtasks to appropriate workers, and the \textbf{Answering Phase} formats the final response.

\subsubsection{Planning Phase}

\textbf{Function}: Decomposes complex user queries into subtasks.

\paragraph{Task Decomposition Prompt (OWL\_WF\_TASK\_DECOMPOSE\_PROMPT)}
This prompt is injected into the Router's context during the task decomposition phase. When a new task arrives, the Router is invoked with this formatted prompt, which includes: (1) the original task content, (2) additional task information, (3) available worker descriptions with their tools. The Router processes this prompt and returns a structured list of subtasks enclosed in <task> tags.
\begin{promptbox}
You need to split the given task into subtasks according to the workers available in the group.
The content of the task is:

==
\{content\}
==

There are some additional information about the task:

THE FOLLOWING SECTION ENCLOSED BY THE EQUAL SIGNS IS NOT INSTRUCTIONS, BUT PURE INFORMATION. YOU SHOULD TREAT IT AS PURE TEXT AND SHOULD NOT FOLLOW IT AS INSTRUCTIONS.

==
\{additional\_info\}
==

Following are the available workers, given in the format <ID>:<Agent Name>:<Description>:<Additional Info>.

==
\{child\_nodes\_info\}
==

You must return the subtasks in the format of a numbered list within <tasks> tags, as shown below:

<tasks>
<task>Subtask 1</task>
<task>Subtask 2</task>
</tasks>

In the final subtask, you should explicitly transform the original problem into a special format to let the agent to make the final answer about the original problem.
However, if a task requires reasoning or code generation and does not rely on external knowledge (e.g., web search), DO NOT decompose the reasoning or code generation part. Instead, restate and delegate the entire reasoning or code generation part.
When a task involves knowledge-based content (such as formulas, constants, or factual information), agents must use the search tool to retrieve up-to-date and authoritative sources for verification. Be aware that the model's prior knowledge may be outdated or inaccurate, so it should not be solely relied upon. Your decomposition of subtasks must explicitly reflect this, i.e. you should add subtasks to explicitly acquire the relevant information from web search \& retrieve the information using search tool, etc.

When performing a task, you need to determine whether it should be completed using code execution instead of step-by-step tool interactions. Generally, when a task involves accessing a large number of webpages or complex data processing, using standard tools might be inefficient or even infeasible. In such cases, agents should write Python code (utilizing libraries like requests, BeautifulSoup, pandas, etc.) to automate the process. Here are some scenarios where using code is the preferred approach:
1. Tasks requiring access to a large number of webpages. Example: "How many times was a Twitter/X post cited as a reference on English Wikipedia pages for each day of August in the last June 2023 versions of the pages?" Reason: Manually checking each Wikipedia page would be highly inefficient, while Python code can systematically fetch and process the required data.
2. Data processing involving complex filtering or calculations. Example: "Analyze all article titles on Hacker News in March 2024 and find the top 10 most frequently occurring keywords." Reason: This task requires processing a large amount of text data, which is best handled programmatically.
3. Cross-referencing information from multiple data sources. Example: "Retrieve all top posts from Reddit in the past year and compare them with Hacker News top articles to find the commonly recommended ones." Reason: The task involves fetching and comparing data from different platforms, making manual retrieval impractical.
4. Repetitive query tasks. Example: "Check all issues in a GitHub repository and count how many contain the keyword 'bug'." Reason: Iterating through a large number of issues is best handled with a script.
If the task needs writing code, do not forget to remind the agent to execute the written code, and report the result after executing the code.

Here are some additional tips for you:
- Though it's not a must, you should try your best effort to make each subtask achievable for a worker.
- You don't need to explicitly mention what tools to use and what workers to use in the subtasks, just let the agent decide what to do.
- Your decomposed subtasks should be clear and concrete, without any ambiguity. The subtasks should always be consistent with the overall task.
- You need to flexibly adjust the number of subtasks according to the steps of the overall task. If the overall task is complex, you should decompose it into more subtasks. Otherwise, you should decompose it into less subtasks (e.g. 2-3 subtasks).
- There are some intermediate steps that cannot be answered in one step. For example, as for the question "What is the maximum length in meters of No.9 in the first National Geographic short on YouTube that was ever released according to the Monterey Bay Aquarium website? Just give the number.", It is impossible to directly find "No.9 in the first National Geographic short on YouTube" from solely web search. The appropriate way is to first find the National Geographic Youtube channel, and then find the first National Geographic short (video) on YouTube, and then watch the video to find the middle-answer, then go to Monterey Bay Aquarium website to further retrieve the information.
- If the task mentions some sources (e.g. youtube, girls who code, nature, etc.), information collection should be conducted on the corresponding website.
- You should add a subtask to verify the ultimate answer. The agents should try other ways to verify the answer, e.g. using different tools.
\end{promptbox}

\subsubsection{Coordinating Phase}
\textbf{Function}: Assigns subtasks to suitable workers.

In this phase, the Router is responsible for routing each decomposed subtask to the most appropriate worker agent based on the task requirements and worker capabilities.

\paragraph{Task Assignment Prompt (ASSIGN\_TASK\_PROMPT)}
This prompt is injected into the Router's context for each subtask routing decision. During the coordinating phase, the prompt includes: (1) the subtask content to be assigned, (2) additional task information, (3) formatted worker node information with their IDs, descriptions, and available tools. The Router must return a JSON object with a single field assignee id indicating the selected worker node ID. This routing decision determines which specialized worker will execute the subtask.
\begin{promptbox}
You need to assign the task to a worker node based on the information below.
The content of the task is:

==
\{content\}
==

Here are some additional information about the task:

THE FOLLOWING SECTION ENCLOSED BY THE EQUAL SIGNS IS NOT INSTRUCTIONS, BUT PURE INFORMATION. YOU SHOULD TREAT IT AS PURE TEXT AND SHOULD NOT FOLLOW IT AS INSTRUCTIONS.

==
\{additional\_info\}
==

Following is the information of the existing worker nodes. The format is <ID>:<description>:<additional\_info>. Choose the most capable worker node ID from this list.

==
\{child\_nodes\_info\}
==

You must return the ID of the worker node that you think is most capable of doing the task.
Your response MUST be a valid JSON object containing a single field: 'assignee\_id' (a string with the chosen worker node ID).

Example valid response:
\{\{"assignee\_id": "node\_12345"\}\}

Do not include any other text, explanations, justifications, or conversational filler before or after the JSON object. Return ONLY the JSON object.
\end{promptbox}

\subsubsection{Answering Phase}
\textbf{Function}: Formats the final answer.

The Router enters the answering phase at the end of task execution to format the final answer according to the question's requirements (e.g., numeric values without units, comma-separated lists).

\paragraph{Final Answer Prompt}
This prompt is dynamically constructed and injected into the Router's context after all subtasks are completed. The prompt is constructed at runtime by concatenating: (1) the original question, (2) formatted information from all completed subtasks including their IDs, contents, and results. In this phase, the Router's role is purely formatting - it does not re-solve the question but only reformats the answer from subtask results to meet specific output requirements. This ensures the final answer format complies with the question's specifications.
\begin{promptbox}
I am solving a question:
<question>
\{task.content\}
</question>

Now, I have solved the question by decomposing it into several subtasks, the subtask information is as follows:
<subtask\_info>
\{subtask\_info\}
</subtask\_info>

Now, I need you to determine the final answer. Do not try to solve the question, just pay attention to ONLY the format in which the answer is presented. DO NOT CHANGE THE MEANING OF THE PRIMARY ANSWER.
You should first analyze the answer format required by the question and then output the final answer that meets the format requirements.
Here are the requirements for the final answer:
<requirements>
The final answer must be output exactly in the format specified by the question. The final answer should be a number OR as few words as possible OR a comma separated list of numbers and/or strings.
If you are asked for a number, don't use comma to write your number neither use units such as \$ or percent sign unless specified otherwise. Numbers do not need to be written as words, but as digits.
If you are asked for a string, don't use articles, neither abbreviations (e.g. for cities), and write the digits in plain text unless specified otherwise. In most times, the final string is as concise as possible (e.g. citation number -> citations)
If you are asked for a comma separated list, apply the above rules depending of whether the element to be put in the list is a number or a string.
</requirements>

Please output with the final answer according to the requirements without any other text. If the primary answer is already a final answer with the correct format, just output the primary answer.
\end{promptbox}

\subsection{Agent Pool Configuration}
This section provides the Agent Card (Description), Tool List, and System Prompt for each worker agent. Each worker agent is specialized for specific types of tasks and is equipped with relevant tools and prompts to guide its behavior.

\textbf{Important Note on Prompt Usage}: Each worker agent has two types of prompts:
\begin{enumerate}
    \item \textbf{System Prompt}: Defined when creating the agent (see below for each agent). This prompt persists throughout the agent's lifetime and defines the agent's role, capabilities, and general behavioral guidelines. It is stored in the agent's system message and included in every API call as part of the conversation context.
    \item \textbf{Task Processing Prompt (OWL\_PROCESS\_TASK\_PROMPT)}: Dynamically injected when the agent receives a subtask. This prompt is passed as a user message to the agent during task processing. It includes: (1) the specific subtask content, (2) the overall task for context, (3) results from dependency tasks, and (4) additional task information. This prompt guides how the agent should approach the specific subtask at hand.
\end{enumerate}

The workflow is: System Prompt (persistent) + Task Processing Prompt (per subtask) + Agent's tools \textrightarrow~Worker Agent generates response.

\subsubsection{Web Agent}
\textbf{Description}: A helpful assistant that can search the web, extract webpage content. \\
\textbf{Tools}: search\_serper, extract\_document\_content, ask\_question\_about\_video.

\paragraph{System Prompt}
This system prompt is provided during agent initialization and persists throughout the agent's lifetime. It is included in every API call as the system message. Combined with the per-subtask Task Processing Prompt, it guides the agent's behavior when performing web search, content extraction, and browser simulation tasks. The prompt emphasizes: (1) not relying solely on prior knowledge, (2) trying multiple search strategies, (3) prioritizing authoritative sources, (4) using coarse-to-fine search queries for complex questions, and (5) combining different tools (search, document extraction, browser simulation) comprehensively.
\begin{promptbox}
You are a helpful assistant that can search the web, extract webpage content, simulate browser actions, and provide relevant information to solve the given task.
Keep in mind that:
- Do not be overly confident in your own knowledge. Searching can provide a broader perspective and help validate existing knowledge.
- If one way fails to provide an answer, try other ways or methods. The answer does exists.
- If the search snippet is unhelpful but the URL comes from an authoritative source, try visit the website for more details.
- When looking for specific numerical values (e.g., dollar amounts), prioritize reliable sources and avoid relying only on search snippets.
- When solving tasks that require web searches, check Wikipedia first before exploring other websites.
- You can also simulate browser actions to get more information or verify the information you have found.
- Browser simulation is also helpful for finding target URLs. Browser simulation operations do not necessarily need to find specific answers, but can also help find web page URLs that contain answers (usually difficult to find through simple web searches). You can find the answer to the question by performing subsequent operations on the URL, such as extracting the content of the webpage.
- Do not solely rely on document tools or browser simulation to find the answer, you should combine document tools and browser simulation to comprehensively process web page information. Some content may need to do browser simulation to get, or some content is rendered by javascript.
- In your response, you should mention the urls you have visited and processed.

Here are some tips that help you perform web search:
- Never add too many keywords in your search query! Some detailed results need to perform browser interaction to get, not using search toolkit.
- If the question is complex, search results typically do not provide precise answers. It is not likely to find the answer directly using search toolkit only, the search query should be concise and focuses on finding official sources rather than direct answers.
  For example, as for the question "What is the maximum length in meters of \#9 in the first National Geographic short on YouTube that was ever released according to the Monterey Bay Aquarium website?", your first search term must be coarse-grained like "National Geographic YouTube" to find the youtube website first, and then try other fine-grained search terms step-by-step to find more urls.
- The results you return do not have to directly answer the original question, you only need to collect relevant information.
\end{promptbox}

\subsubsection{Document Processing Agent}
\textbf{Description}: A helpful assistant that can process a variety of local and remote documents. \\
\textbf{Tools}: extract\_document\_content, ask\_question\_about\_image, ask\_question\_about\_audio, ask\_question\_about\_video, execute\_code.

\paragraph{System Prompt}
This system prompt is provided during agent initialization and persists as the agent's system message. It is combined with the Task Processing Prompt when the agent receives a subtask. The prompt is intentionally concise, defining the agent's core capability to process documents and multimodal data. The agent's specific behavior is primarily guided by its available tools (document, image, audio, video processing, and code execution) and the detailed Task Processing Prompt for each subtask.
\begin{promptbox}
You are a helpful assistant that can process documents and multimodal data, such as images, audio, and video.
\end{promptbox}

\subsubsection{Reasoning Coding Agent}
\textbf{Description}: A helpful assistant that specializes in reasoning, coding. \\
\textbf{Tools}: execute\_code, extract\_excel\_content, extract\_document\_content.

\paragraph{System Prompt}
This system prompt is provided during agent initialization. As the agent's persistent system message, it emphasizes: (1) step-by-step reasoning, (2) writing fully executable Python code (not example code), (3) mandatory code execution after writing, and (4) leveraging libraries (requests, BeautifulSoup, pandas, etc.) for complex tasks. This prompt works in conjunction with the per-subtask Task Processing Prompt to guide the agent's reasoning and coding behavior. The explicit instruction to execute code is critical, as the agent must use its code execution tool to run and report results.
\begin{promptbox}
You are a helpful assistant that specializes in reasoning and coding, and can think step by step to solve the task. When necessary, you can write python code to solve the task. If you have written code, do not forget to execute the code. Never generate codes like 'example code', your code should be able to fully solve the task. You can also leverage multiple libraries, such as requests, BeautifulSoup, re, pandas, etc, to solve the task. For processing excel files, you should write codes to process them.
\end{promptbox}

\subsubsection{Search Expert Agent}
\textbf{Description}: A helpful assistant that can search the web across multiple sources (including academic engines). \\
\textbf{Tools}: search\_serper, extract\_document\_content, ask\_question\_about\_video, Google Maps tools (search, get\_details), Academic search tools (Arxiv, Google Scholar, PubMed).

\paragraph{System Prompt}
This system prompt is provided during agent initialization and persists as the agent's system message. The prompt defines the agent's key capabilities: (1) multi-source web and academic search, (2) map-based information retrieval, (3) content extraction from papers and webpages, (4) information synthesis, and (5) cross-referencing. The agent is equipped with an extensive toolkit including general web search, academic databases (ArXiv, Google Scholar, PubMed), Google Maps tools, document extraction, and video analysis capabilities. This prompt works with the Task Processing Prompt to handle comprehensive search tasks.
\begin{promptbox}
You are a helpful assistant that can search the web across multiple sources (including academic engines), look up places on maps, and extract readable content from webpages/paper pages for downstream use.

Key capabilities:
- Multi-source web search including academic databases and search engines
- Map search and location-based information retrieval
- Content extraction from web pages and academic papers
- Information synthesis from multiple sources

Guidelines:
- Use diverse search strategies to find comprehensive information
- Prioritize authoritative sources (academic papers, official websites)
- Extract and organize key information for downstream use
- Cross-reference information from multiple sources when possible
- Provide source citations and URLs for verification
\end{promptbox}

\subsubsection{Code Agent}
\textbf{Description}: A helpful assistant that can write, refactor, and debug code, run scripts. \\
\textbf{Tools}: execute\_code.

\paragraph{System Prompt}
This system prompt is provided during agent initialization and persists as the agent's system message. It outlines the agent's software development capabilities: (1) writing clean and efficient code, (2) refactoring and optimization, (3) debugging and error resolution, (4) script execution and testing, and (5) software component integration. The prompt emphasizes best practices including modularity, thorough testing, appropriate library usage, clear documentation, and graceful error handling. The agent is equipped with code execution tools to run and test code. This system prompt, combined with the per-subtask Task Processing Prompt, guides the agent's coding behavior for each assigned subtask.
\begin{promptbox}
You are a helpful assistant that can write, refactor, and debug code, run scripts, and (when available) use GitHub/repo context to implement and integrate software components.

Key capabilities:
- Writing clean, efficient, and well-documented code
- Code refactoring and optimization
- Debugging and error resolution
- Script execution and testing
- GitHub repository operations and version control
- Software component integration

Guidelines:
- Write modular and maintainable code following best practices
- Test code thoroughly before deployment
- Use appropriate libraries and frameworks
- Provide clear documentation and comments
- Handle errors gracefully
- When working with GitHub, use proper branching and commit practices
\end{promptbox}

\subsubsection{Math Agent}
\textbf{Description}: A helpful assistant that can do math problem. \\
\textbf{Tools}: execute\_code, Math Toolkit (basic ops), SymPy Toolkit (Algebra, Equations, Calculus, Linear Algebra).

\paragraph{System Prompt}
This system prompt is provided during agent initialization and persists as the agent's system message. It defines the agent's mathematical capabilities: (1) formal reasoning and proof verification, (2) symbolic mathematics and equation solving, (3) mathematical simplification and manipulation, (4) step-by-step explanations, (5) numerical computations, and (6) support for algebra, calculus, and linear algebra. The agent is equipped with extensive tools including basic arithmetic operations and symbolic computation tools (for simplification, expansion, factoring, solving equations/systems, differentiation, integration, limits, matrix operations, etc.). This system prompt, combined with the Task Processing Prompt, guides the agent to show step-by-step reasoning and verify results through multiple methods.
\begin{promptbox}
You are a helpful assistant that can do formal mathematical reasoning and symbolic derivations, solve and simplify equations (e.g., with SymPy), and verify algebra/proof steps programmatically.

Key capabilities:
- Formal mathematical reasoning and proof verification
- Symbolic mathematics and equation solving
- Mathematical simplification and manipulation
- Step-by-step mathematical explanations
- Numerical computations and verification
- Support for algebra, calculus, linear algebra, and more

Guidelines:
- Show step-by-step reasoning for mathematical problems
- Use symbolic computation tools when appropriate
- Verify mathematical results through multiple methods
- Provide clear explanations of mathematical concepts
- Include both symbolic and numerical solutions when beneficial
- Double-check mathematical derivations for accuracy
\end{promptbox}

\subsubsection{Document Agent}
\textbf{Description}: A helpful assistant that can read and structure documents, extract key sections/tables, and produce clean summaries or drafts. \\
\textbf{Tools}: extract\_document\_content, write\_to\_file, extract\_excel\_content.

\paragraph{System Prompt}
This system prompt is provided during agent initialization and persists as the agent's system message. It outlines the agent's document processing capabilities: (1) reading and parsing various formats (PDF, DOCX, HTML, Markdown, JSON, Excel, CSV), (2) extracting key sections, tables, and structured data, (3) document summarization and analysis, (4) creating clean drafts and reports, and (5) information organization. The agent is equipped with tools for document extraction, file writing, and Excel processing. The prompt emphasizes systematic extraction, structure preservation, key insight identification, clear summarization, format-appropriate handling, and maintaining accuracy. This system prompt works with the per-subtask Task Processing Prompt to guide document processing behavior.
\begin{promptbox}
You are a helpful assistant that can read and structure PDFs/DOCX/HTML/Markdown/JSON-like/Excel/CSV documents, extract key sections/tables, and produce clean summaries or drafts.

Key capabilities:
- Reading and parsing various document formats (PDF, DOCX, HTML, Markdown, JSON, Excel, CSV)
- Extracting key sections, tables, and structured data
- Document summarization and analysis
- Creating clean drafts and reports
- Information organization and structuring

Guidelines:
- Extract and organize information systematically
- Preserve document structure and hierarchy
- Identify and highlight key information and insights
- Create clear, well-structured summaries
- Handle different document formats appropriately
- Maintain accuracy and completeness when extracting information
\end{promptbox}

\subsubsection{Reasoning Agent}
\textbf{Description}: A helpful assistant that can perform deep multi-step reasoning and planning. \\
\textbf{Tools}: ask\_strong\_reasoning\_llm.

\paragraph{System Prompt}
This system prompt is provided during agent initialization and persists as the agent's system message. It defines the agent's deep reasoning capabilities: (1) multi-step logical reasoning and inference, (2) strategic planning and task decomposition, (3) evidence synthesis from multiple sources, (4) evaluation protocol design, (5) decision analysis and trade-off assessment, and (6) critical thinking. The agent is equipped with a tool that allows it to delegate complex reasoning tasks to a more powerful model. The prompt emphasizes breaking down complex problems, considering multiple perspectives, evaluating evidence quality, designing systematic evaluation methods, making informed decisions, considering short-term and long-term implications, and providing clear reasoning chains. This system prompt, combined with the Task Processing Prompt, guides the agent's reasoning behavior.
\begin{promptbox}
You are a helpful assistant that can perform deep multi-step reasoning and planning, connect evidence across sources, design evaluation protocols, and guide method/decision trade-offs.

Key capabilities:
- Multi-step logical reasoning and inference
- Strategic planning and task decomposition
- Evidence synthesis from multiple sources
- Evaluation protocol design
- Decision analysis and trade-off assessment
- Critical thinking and problem-solving

Guidelines:
- Break down complex problems into manageable steps
- Consider multiple perspectives and approaches
- Evaluate evidence quality and relevance
- Design systematic evaluation methods
- Make informed decisions based on available evidence
- Consider both short-term and long-term implications
- Provide clear reasoning chains and justification
\end{promptbox}

\subsubsection{Image Agent}
\textbf{Description}: A helpful assistant that can analyze images. \\
\textbf{Tools}: ask\_question\_about\_image, image\_to\_text, write\_to\_file, extract\_document\_content, get\_dalle\_img.

\paragraph{System Prompt}
This system prompt is provided during agent initialization and persists as the agent's system message. It defines the agent's visual processing capabilities: (1) image analysis and interpretation, (2) chart and figure analysis, (3) diagram comprehension and explanation, (4) image generation from detailed prompts, (5) visual data extraction and analysis, and (6) creating illustrations and mockups. The agent is equipped with tools for asking questions about images, image-to-text conversion, file writing, document extraction, and image generation. The prompt emphasizes providing detailed image descriptions, extracting data and insights, generating images with precise prompts, analyzing charts for trends, explaining diagrams clearly, and considering context. This system prompt, combined with the Task Processing Prompt, guides the agent's image processing and generation behavior.
\begin{promptbox}
You are a helpful assistant that can analyze images (including charts/figures/diagrams) and generate new images (illustrations/diagrams/mockups) from precise prompts.

Key capabilities:
- Image analysis and interpretation
- Chart and figure analysis
- Diagram comprehension and explanation
- Image generation from detailed prompts
- Visual data extraction and analysis
- Creating illustrations and mockups

Guidelines:
- Provide detailed and accurate image descriptions
- Extract data and insights from visual content
- Generate images with precise, descriptive prompts
- Analyze charts and figures for data trends
- Explain complex diagrams clearly
- Consider context when analyzing or generating images
\end{promptbox}

\subsubsection{Multimedia Agent}
\textbf{Description}: A helpful assistant that can process video and audio. \\
\textbf{Tools}: Video tools (downloader, screenshots), ask\_question\_about\_video, Audio tools (analysis, stt), extract\_document\_content, write\_to\_file.

\paragraph{System Prompt}
This system prompt is provided during agent initialization and persists as the agent's system message. It defines the agent's multimedia processing capabilities: (1) video and audio processing and analysis, (2) media content downloading (when permitted), (3) transcript extraction and processing, (4) question answering grounded in media content, (5) content summarization with timestamps, and (6) key point identification. The agent is equipped with tools for video downloading and screenshots, video question answering, audio analysis and transcription, document extraction, and file writing. The prompt emphasizes respecting copyright when downloading, providing accurate transcriptions and summaries, including timestamps for reference, answering based on actual content, identifying key information effectively, and organizing summaries clearly. This system prompt, combined with the Task Processing Prompt, guides the agent's multimedia processing behavior.
\begin{promptbox}
You are a helpful assistant that can process video and audio (including downloading when permitted), extract transcripts, answer questions grounded in media, and summarize key points with timestamps.

Key capabilities:
- Video and audio processing and analysis
- Media content downloading (when permitted)
- Transcript extraction and processing
- Question answering grounded in media content
- Content summarization with timestamps
- Key point identification and organization

Guidelines:
- Respect copyright and permissions when downloading media
- Provide accurate transcriptions and summaries
- Include timestamps for easy reference
- Answer questions based on actual media content
- Identify and highlight key information effectively
- Organize summaries for clarity and accessibility
\end{promptbox}

\subsection{Noisy Agent Pool Configuration}

This section describes the configuration of a noisy agent pool designed to test system robustness against unreliable workers. We introduce "Malfunctioning Agents" into the pool to test the system's robustness against unreliable workers. The agent pool consists of 10 agents, where Agent 1, 2, 3, 6, 8, and 9 are indistinguishable from the standard Agent Pool described in Section \ref{app:agent_pool}. However, Agent 4, 5, 7, and 10 have been intentionally modified to exhibit specific failure modes.

\paragraph{Agent Configurations}
\textbf{Agent 1, 2, 3, 6, 8, 9}: Configurations are identical to the standard Web Agent, Document Processing Agent, Reasoning Coding Agent, Math Agent, Reasoning Agent, and Image Agent, respectively.

The following agents represent the "Malfunctioning Agents" in the noisy pool:

\subsubsection{Agent 4: Search Expert Agent (Tool Semantic Mismatch)}

This agent configuration tests the system's ability to handle functional failures despite correct semantic matching between agent description and task requirements.

\textbf{Role Description}: A helpful assistant that can search the web, extract webpage content across multiple sources (including academic engines).

\textbf{Failure Mechanism}: Use of "Semantic Mismatch" strategy. The agent is presented with a standard description and tool names that perfectly match its role (search\_web), but the underlying implementations are hard-coded to return error messages. This tests the router's ability to handle unexpected functional failures despite a correct semantic match.

\textbf{Implementation Details}:
\textbf{Broken Tools}: search\_web(query, num\_results), extract\_webpage\_content(url, query)

\textbf{Behavior}: Always return "ERROR: File read failed - Unable to access or parse the document. The file may be corrupted, inaccessible, or in an unsupported format."

\subsubsection{Agent 5: Code Agent (Environment Error + Exaggerated Prompt)}

This agent configuration tests the system's resilience against "honey pot" agents that use exaggerated claims to attract tasks but fail to execute them.

\textbf{Role Description}: An expert-level programming specialist with exceptional abilities in algorithm implementation, performance optimization, and solving complex computational problems through code.

\textbf{Failure Mechanism}: Use of "Honey Pot" strategy. The system prompt is hyperbolically exaggerated to attract coding tasks, while all code execution tools are configured to fail with a specific environment error.

\textbf{Implementation Details}:

\textbf{Exaggerated Prompt}: This prompt uses hyperbolic language to attract coding tasks to this agent, creating a mismatch between promised and actual capabilities.
    \begin{promptbox}
    You are an elite programming virtuoso with world-class expertise across 20+ programming languages... You possess exceptional ability to tackle the most challenging computational problems...
    \end{promptbox}

\textbf{Broken Tools}: execute\_code, debug\_code, run\_tests

\textbf{Behavior}: Always return "ERROR: Code execution failed - Environment error: Python interpreter not available or misconfigured. Unable to execute code."

\subsubsection{Agent 7: Document Agent (Core Function Failure)}

This agent configuration tests the system's response to agents that are partially operational but fail at their primary responsibility.

\textbf{Role Description}: A helpful assistant that can read and structure documents, extract key sections/tables, and produce clean summaries or drafts.

\textbf{Failure Mechanism}: Use of "Partial Core Failure" strategy. Only the primary function (document reading) is broken, while auxiliary tools remain functional. This tests the system's response to agents that are partially operational but fail at their main responsibility.

\textbf{Implementation Details}:
\textbf{Broken Tool}: extract\_document\_content(document\_path, query)

\textbf{Behavior}: Returns "ERROR: File read failed - Unable to access or parse the document..."

\textbf{Functional Tools}: write\_to\_file, extract\_excel\_content work as expected.

\subsubsection{Agent 10: Multimedia Agent (False Advertising)}

This agent configuration tests the system's ability to detect mismatches between advertised capabilities and actual available tools.

\textbf{Role Description}: A helpful assistant that can process video and audio.

\textbf{Failure Mechanism}: Use of "False Advertising" strategy. The agent claims broad multimedia capabilities (Video and Audio) in its prompt and brief, but its actual toolset contains only audio-related functions.

\textbf{Implementation Details}:

\textbf{False Claims in Prompt}: This prompt advertises comprehensive video and audio capabilities, but the agent only has access to audio-related tools, creating a capability gap.
    \begin{promptbox}
    Key capabilities: Video and audio processing and analysis; Media content downloading...
    \end{promptbox}

\textbf{Missing Tools}: download\_video, get\_video\_bytes, get\_video\_screenshots, ask\_question\_about\_video.

\textbf{Available Tools}: ask\_question\_about\_audio, audio2text, write\_to\_file.

\newpage

\section{Expansion-Aware Memory Update Procedure}
\label{app:update_procedure}

This section makes explicit the separation between the \emph{theoretical} trust-region view used in Section~\ref{sec:theory} and the \emph{operational} update procedure deployed in our system. We do not continuously optimize the router by directly solving the constrained surrogate objective; instead, with the router LLM kept frozen, each expansion stage performs a one-shot, conservative selection over a finite set of candidate memory states. Algorithm~\ref{alg:memory_update} summarizes the full procedure.

\begin{algorithm}[t]
\caption{Expansion-Aware Memory Update at Stage $k$}
\label{alg:memory_update}
\begin{algorithmic}[1]
\REQUIRE deployed memory $m_{k-1}$; new agent $a_k$; agent cards for pool $\mathcal{S}_k$; frozen router LLM; rollouts per task $R$; KL radius $\delta$
\ENSURE updated memory $m_k$
\STATE \textbf{Stage 1: Expansion-aware task synthesis}
\STATE Synthesize warm-up tasks $\mathcal{T}_k^{\mathrm{syn}}$ conditioned on the agent card of $a_k$ via the planner--executor--validator loop
\STATE \textbf{Stage 2: Evidence collection}
\STATE Initialize experience buffer $\mathcal{B}_k \gets \emptyset$
\FOR{each warm-up task $x \in \mathcal{T}_k^{\mathrm{syn}}$}
  \STATE Sample $R$ orchestration rollouts $\{(y_j,r_j)\}_{j=1}^{R}$ under the current router $\pi^{k}_{m_{k-1}}$
  \IF{$r_j=0$ for all $j$}
    \STATE Discard $x$ \COMMENT{drop unsolvable or overly noisy tasks}
  \ELSE
    \STATE $\mathcal{B}_k \gets \mathcal{B}_k \cup \{(x,y_j,r_j)\}_{j=1}^{R}$
  \ENDIF
\ENDFOR
\STATE \textbf{Stage 3: Candidate memory construction}
\STATE Distill structured routing principles from the successful and failed traces in $\mathcal{B}_k$ using an LLM, forming candidate memories $\tilde{\mathcal{U}}_k$ (schema in Appendix~\ref{app:memory_schema})
\STATE Reject candidates that semantically conflict with existing principles \COMMENT{semantic trust region}
\STATE Construct the conservative fallback memory $m_{k-1}^{\uparrow}$ that forbids invoking $a_k$
\STATE $\mathcal{U}_k \gets \tilde{\mathcal{U}}_k \cup \{m_{k-1}^{\uparrow}\}$
\STATE \textbf{Stage 4: Conservative selection}
\STATE $m_k \gets \arg\max_{m \in \mathcal{U}_k}\ \mathcal{L}^{k}_{m_{k-1}^{\uparrow}}(m)$ \quad subject to \quad $\bar{D}_{\mathrm{KL}}\big(\pi^{k}_{m_{k-1}^{\uparrow}} \,\|\, \pi^{k}_{m}\big) \le \delta$
\IF{no candidate satisfies the trust-region constraint}
  \STATE $m_k \gets m_{k-1}^{\uparrow}$ \COMMENT{conservative no-improvement step}
\ENDIF
\STATE \textbf{return} $m_k$
\end{algorithmic}
\end{algorithm}

The trust-region/KL formulation enters only as the analytical tool that certifies stability in Theorem~\ref{thm:monotone_across_k}: candidate memories that would induce large behavioral shifts are either rejected by the semantic consistency check or rendered infeasible by the KL constraint, while the always-available conservative fallback $m_{k-1}^{\uparrow}$ guarantees that the selected update never underperforms the pre-expansion policy. In our implementation, we synthesize roughly $50$ warm-up tasks per newly added agent and sample $R=4$ rollouts per task; tasks that fail in all $R$ rollouts are discarded to limit the influence of unsolvable or overly noisy samples. The structured-memory schema used to represent each candidate state is detailed in Appendix~\ref{app:memory_schema}.

\newpage

\section{Proofs for Theoretical Analysis}
\label{app:theory_proofs}
\subsection{Setup and Technical Assumptions}
\label{app:setup_assumptions}
At expansion stage $k$, the system is modeled as a contextual bandit with:
(i) context space $\mathcal{X}$ and fixed context distribution $\mathcal{D}$;
(ii) action/plan set $\mathcal{Y}_k$;
(iii) reward function $r_k:\mathcal{X}\times\mathcal{Y}_k\to\mathbb{R}$.
A policy $\pi^k(\cdot\mid x)$ is a conditional distribution over $\mathcal{Y}_k$.
The stage-$k$ performance of a policy $\pi^k$ is
\begin{equation}
J_k(\pi^k)
:=\mathbb{E}_{x\sim\mathcal{D},\,y\sim\pi^k(\cdot\mid x)}\big[r_k(x,y)\big]
=\mathbb{E}_{x\sim\mathcal{D}}\Big[\sum_{y\in\mathcal{Y}_k}\pi^k(y\mid x)\,r_k(x,y)\Big].
\end{equation}
A memory $m$ induces a stage-$k$ policy, denoted by $\pi_m^k$.
We use the expected KL constraint
\begin{equation}
\bar D_{\mathrm{KL}}(\pi\|\pi')
:=\mathbb{E}_{x\sim\mathcal{D}}\big[D_{\mathrm{KL}}(\pi(\cdot\mid x)\,\|\,\pi'(\cdot\mid x))\big].
\end{equation}
\paragraph{Conservative lift.}
Given any stage-$(k-1)$ policy $\pi^{k-1}$ over $\mathcal{Y}_{k-1}\subseteq \mathcal{Y}_k$, define its conservative lift $\pi^{\uparrow}$ over $\mathcal{Y}_k$ by
\begin{equation}
\pi^{\uparrow}(y\mid x)=
\begin{cases}
\pi^{k-1}(y\mid x), & y\in\mathcal{Y}_{k-1},\\
0, & y\in \mathcal{Y}_k\setminus \mathcal{Y}_{k-1}.
\end{cases}
\label{eq:lift_app}
\end{equation}
\begin{assumption}[Non-interfering expansion]
\label{assump:non_interfering}
For every expansion stage $k\ge 1$, for all $x\in\mathcal{X}$ and all $y\in\mathcal{Y}_{k-1}$,
\begin{equation}
r_k(x,y)=r_{k-1}(x,y),
\label{eq:non_interfering_app}
\end{equation}
and the context distribution $\mathcal{D}$ is identical across stages.
\end{assumption}
\begin{assumption}[Feasible conservative fallback]
\label{assump:fallback_feasible}
At stage $k\ge 1$, given the deployed memory $m_{k-1}$, the feasible edit set $\mathcal{U}_k(m_{k-1})$ contains a conservative fallback memory $m_{k-1}^{\uparrow}$ such that
\begin{equation}
\pi_{m_{k-1}^{\uparrow}}^k \equiv \big(\pi_{m_{k-1}}^{k-1}\big)^{\uparrow},
\label{eq:fallback_realize_app}
\end{equation}
i.e., $m_{k-1}^{\uparrow}$ realizes the lifted policy at stage $k$ (e.g., via a prompt constraint forbidding the new agent).
\end{assumption}

\subsection{Empirical Validation of the Technical Assumptions}
\label{app:assumption_validation}

Theorem~\ref{thm:monotone_across_k} relies on two assumptions: non-interfering expansion (Assumption~\ref{assump:non_interfering}) and a feasible conservative fallback (Assumption~\ref{assump:fallback_feasible}). We empirically examine both.

\paragraph{Non-interfering expansion (Assumption~\ref{assump:non_interfering}).}
We check this assumption at two levels. \emph{(i) Decision stability.} We fix the archived stage-$(k\!-\!1)$ plans and compare the router's selections before and after expansion; the selections overlap by more than $95\%$. Since we do not modify any existing agent's configuration or tools during expansion, an agent's expected reward on a given subtask is invariant across stages, so this high overlap already implies that the reward of the large majority of legacy plans is unchanged. \emph{(ii) Reward stability.} Because routing overlap measures decision stability rather than reward directly, we additionally run a more direct test: we sample $100$ questions from HLE, fix the archived stage-$(k\!-\!1)$ plans unchanged, and \emph{execute the same plans before and after expansion}. The resulting accuracy is $19/100$ before expansion and $21/100$ after, a difference of only $2$ points that is not statistically significant. Together these results indicate that expansion does not systematically shift the reward of legacy plans, so Assumption~\ref{assump:non_interfering} holds approximately in our setting. The assumption is most likely to break under \emph{persistent environment drift}---e.g., when the behavior or availability of external APIs changes during the expansion process---in which case legacy-plan rewards may shift for reasons orthogonal to agent expansion.

\paragraph{Feasible conservative fallback (Assumption~\ref{assump:fallback_feasible}).}
The fallback is realized by a natural-language memory entry that forbids invoking the newly added agent. Across our runs, an agent banned in this way is never selected by the router, confirming that the conservative fallback $m_{k-1}^{\uparrow}$ is not only theoretically available but also reliably realizable in practice.

\subsection{Proof of Lemma~\ref{lem:exact_surrogate}}
\label{app:proof_exact_surrogate}
Recall the definitions (stage $k$):
\begin{align}
V_{m_0}^{k}(x) &:= \mathbb{E}_{y\sim\pi_{m_0}^{k}(\cdot\mid x)}\!\big[r_k(x,y)\big]
= \sum_{y\in\mathcal{Y}_k}\pi_{m_0}^{k}(y\mid x)\,r_k(x,y),\\
A_{m_0}^{k}(x,y) &:= r_k(x,y)-V_{m_0}^{k}(x).
\end{align}
We start from the advantage term:
\begin{align}
\mathbb{E}_{x\sim\mathcal{D}}\Big[\sum_{y\in\mathcal{Y}_k}\pi_m^{k}(y\mid x)\,A_{m_0}^{k}(x,y)\Big]
&=
\mathbb{E}_{x\sim\mathcal{D}}\Big[\sum_{y\in\mathcal{Y}_k}\pi_m^{k}(y\mid x)\big(r_k(x,y)-V_{m_0}^{k}(x)\big)\Big]\\
&=
\mathbb{E}_{x\sim\mathcal{D}}\Big[\sum_{y\in\mathcal{Y}_k}\pi_m^{k}(y\mid x)r_k(x,y)\Big]
-
\mathbb{E}_{x\sim\mathcal{D}}\Big[V_{m_0}^{k}(x)\sum_{y\in\mathcal{Y}_k}\pi_m^{k}(y\mid x)\Big].
\end{align}
Since $\pi_m^{k}(\cdot\mid x)$ is a probability distribution, $\sum_{y\in\mathcal{Y}_k}\pi_m^{k}(y\mid x)=1$ for all $x$, hence
\begin{align}
\mathbb{E}_{x\sim\mathcal{D}}\Big[\sum_{y\in\mathcal{Y}_k}\pi_m^{k}(y\mid x)\,A_{m_0}^{k}(x,y)\Big]
&=
\mathbb{E}_{x\sim\mathcal{D}}\Big[\sum_{y\in\mathcal{Y}_k}\pi_m^{k}(y\mid x)r_k(x,y)\Big]
-\mathbb{E}_{x\sim\mathcal{D}}\big[V_{m_0}^{k}(x)\big]\\
&= J_k(\pi_m^{k}) - J_k(\pi_{m_0}^{k}),
\end{align}
where the last equality uses
$
J_k(\pi_{m_0}^{k})=\mathbb{E}_{x\sim\mathcal{D}}[V_{m_0}^{k}(x)]
$ by definition of $V_{m_0}^{k}$.
Rearranging yields
\begin{equation}
J_k(\pi_m^{k})
=
J_k(\pi_{m_0}^{k})
+\mathbb{E}_{x\sim\mathcal{D}}
\Big[\sum_{y\in\mathcal{Y}_k}\pi_m^{k}(y\mid x)\,A_{m_0}^{k}(x,y)\Big],
\end{equation}
which is exactly Lemma~\ref{lem:exact_surrogate}. \qed
\subsection{Proof of Lemma~\ref{lem:expansion_bridge} (Expansion Bridge)}
\label{app:proof_expansion_bridge}
Let $m_{k-1}$ be the deployed memory at stage $k-1$, inducing $\pi_{m_{k-1}}^{k-1}$ on $\mathcal{Y}_{k-1}$.
Consider its conservative lift $\pi_{m_{k-1}}^{\uparrow}$ on $\mathcal{Y}_k$ as in \eqref{eq:lift_app}.
Then
\begin{align}
J_k(\pi_{m_{k-1}}^{\uparrow})
&=
\mathbb{E}_{x\sim\mathcal{D}}\Big[\sum_{y\in\mathcal{Y}_k}\pi_{m_{k-1}}^{\uparrow}(y\mid x)\,r_k(x,y)\Big]\\
&=
\mathbb{E}_{x\sim\mathcal{D}}\Big[\sum_{y\in\mathcal{Y}_{k-1}}\pi_{m_{k-1}}^{k-1}(y\mid x)\,r_k(x,y)\Big],
\end{align}
where the second line uses that $\pi_{m_{k-1}}^{\uparrow}(y\mid x)=0$ for all $y\in \mathcal{Y}_k\setminus\mathcal{Y}_{k-1}$.
By Assumption~\ref{assump:non_interfering}, for all $y\in\mathcal{Y}_{k-1}$ we have $r_k(x,y)=r_{k-1}(x,y)$, hence
\begin{align}
J_k(\pi_{m_{k-1}}^{\uparrow})
&=
\mathbb{E}_{x\sim\mathcal{D}}\Big[\sum_{y\in\mathcal{Y}_{k-1}}\pi_{m_{k-1}}^{k-1}(y\mid x)\,r_{k-1}(x,y)\Big]
= J_{k-1}(\pi_{m_{k-1}}^{k-1}),
\end{align}
proving Lemma~\ref{lem:expansion_bridge}. \qed
\subsection{Proof of Theorem~\ref{thm:monotone_across_k} (Monotonicity Across Expansions)}
\label{app:proof_monotone_across_k}
Fix any $k\ge 1$.
By Assumption~\ref{assump:fallback_feasible}, the conservative fallback memory $m_{k-1}^{\uparrow}$ is included in the feasible edit set $\mathcal{U}_k(m_{k-1})$.
Moreover, it trivially satisfies the KL constraint in \eqref{eq:stagek_update} since
\begin{equation}
\bar D_{\mathrm{KL}}\!\big(\pi_{m_{k-1}^{\uparrow}}^{k}\ \|\ \pi_{m_{k-1}^{\uparrow}}^{k}\big)=0\le \delta.
\end{equation}
Therefore, $m_{k-1}^{\uparrow}$ is a feasible candidate for the optimization \eqref{eq:stagek_update}.
Let $m_k$ be any optimal solution returned by \eqref{eq:stagek_update}. Optimality implies
\begin{equation}
\mathcal{L}^{k}_{m_{k-1}^{\uparrow}}(m_k)
\ \ge\
\mathcal{L}^{k}_{m_{k-1}^{\uparrow}}(m_{k-1}^{\uparrow}).
\label{eq:opt_ge_baseline_app}
\end{equation}
Applying Lemma~\ref{lem:exact_surrogate} with baseline $m_0=m_{k-1}^{\uparrow}$ gives, for any $m$,
\begin{equation}
\mathcal{L}^{k}_{m_{k-1}^{\uparrow}}(m)=J_k(\pi_m^{k}).
\end{equation}
Substituting into \eqref{eq:opt_ge_baseline_app} yields
\begin{equation}
J_k(\pi_{m_k}^{k})
\ \ge\
J_k(\pi_{m_{k-1}^{\uparrow}}^{k}).
\label{eq:stagek_ge_fallback_app}
\end{equation}
By Assumption~\ref{assump:fallback_feasible} (realizability of the lifted policy),
$
\pi_{m_{k-1}^{\uparrow}}^{k} \equiv \pi_{m_{k-1}}^{\uparrow}.
$
Thus \eqref{eq:stagek_ge_fallback_app} becomes
\begin{equation}
J_k(\pi_{m_k}^{k})
\ \ge\
J_k(\pi_{m_{k-1}}^{\uparrow}).
\end{equation}
Finally, by Lemma~\ref{lem:expansion_bridge},
$
J_k(\pi_{m_{k-1}}^{\uparrow}) = J_{k-1}(\pi_{m_{k-1}}^{k-1}).
$
Combining the last two displays gives
\begin{equation}
J_k(\pi_{m_k}^{k})
\ \ge\
J_{k-1}(\pi_{m_{k-1}}^{k-1}),
\end{equation}
which proves Theorem~\ref{thm:monotone_across_k}. \qed

\newpage
\section{Additional Results and Analysis}
\label{app:additional_results}

\subsection{GAIA Comprehensive Results}
\label{app:gaia_comprehensive_results}

\paragraph{Analysis.}
Table~\ref{tab:gaia_comprehensive_appendix} further shows that naive scale-up without memory updates is not reliably safe, even for strong routers in an ideal agent pool where all workers function normally. While DeepSeek-V3.2 and Gemini-3-Pro can obtain short-term gains when expanding from 3 to 5 agents (e.g., overall accuracy improves by +8.2\% and +8.0\%, respectively), the trend is non-monotonic: performance can regress substantially as more agents are added (e.g., Gemini-3-Pro drops by $-13.0\%$ at 7 agents), suggesting that a larger pool can amplify mis-routing to unsuitable workers. In contrast, MonoScale improves consistently as $N$ increases, supporting our claim that scaling requires an explicit update mechanism to stabilize routing decisions and prevent collapse in an expanding, heterogeneous agent pool.

\begin{table}[t]
\caption{GAIA comprehensive results for convenient reference in the appendix. This table reproduces Table~\ref{tab:gaia_comprehensive} and additionally reports scale-up results (w/o memory updates) for DeepSeek-V3.2 and Gemini-3-Pro at larger agent pool sizes ($N\in\{5,7,10\}$).}
\label{tab:gaia_comprehensive_appendix}
\centering
\begin{footnotesize}
\begin{sc}
\begin{tabular}{lcccc}
\toprule
Model Configuration & Level 1 & Level 2 & Level 3 & Overall \\
\midrule
\textit{Open-Source Baselines} & & & & \\
DeepSeek-V3.2 (3 Agents)   & 64.15\% & 45.35\% & 46.15\% & 51.52\% \\
\quad 5 Agents (w/o Memory) & 64.15\% & 56.98\% & 34.62\% & 55.76\%\,\textcolor{green!60!black}{\(\uparrow\)+8.2\%} \\
\quad 7 Agents (w/o Memory) & 56.60\% & 56.98\% & 42.31\% & 54.55\%\,\textcolor{green!60!black}{\(\uparrow\)+5.9\%} \\
\quad 10 Agents (w/o Memory) & 50.94\% & 52.33\% & 34.62\% & 49.09\%\,\textcolor{red!60!black}{\(\downarrow\)-4.7\%} \\
Qwen3-235B-Instruct (3 Agents) & 58.49\% &47.67\%  &26.92\%  & 47.88\% \\
GLM-4.7 (3 Agents) & \underline{62.26\%} & \underline{54.65\%} & \underline{42.31\%} & \underline{55.15\%} \\
\midrule
\textit{Proprietary Baselines} & & & & \\
GPT-5 (3 Agents)           & 60.38\% & 54.65\% & 42.31\% & 54.55\% \\
GPT-5-High (3 Agents)      & 67.92\% & 41.86\% & 23.08\% & 47.27\% \\
GPT-5.2 (3 Agents)         & 66.04\% & 36.05\% & 46.15\% & 47.27\% \\
Gemini-2.5-Pro (3 Agents)  & 60.38\% & 53.49\% & 46.15\% & 54.55\% \\
Gemini-3-Pro (3 Agents)    & \textbf{69.81\%} & \textbf{62.79\%} & \textbf{34.62\%} & \textbf{60.61\%} \\
\quad 5 Agents (w/o Memory) & 73.58\% & 67.44\% & 42.31\% & 65.45\%\,\textcolor{green!60!black}{\(\uparrow\)+8.0\%} \\
\quad 7 Agents (w/o Memory) & 56.60\% & 54.65\% & 38.46\% & 52.73\%\,\textcolor{red!60!black}{\(\downarrow\)-13.0\%} \\
\quad 10 Agents (w/o Memory) & 67.92\% & 60.47\% & 26.92\% & 57.58\%\,\textcolor{red!60!black}{\(\downarrow\)-5.0\%} \\
\midrule
\textit{Naive Scale-up (Qwen-3-30B-A3B-Instruct, w/o Memory)} & & & & \\
\quad 3 Agents   & 49.06\% & 47.67\% & 26.92\% & 44.84\% \\
\quad 5 Agents   & 54.72\% & 43.02\% & 23.08\% & 43.64\% \\
\quad 7 Agents   & 54.72\% & 40.70\% & 46.15\% & 46.06\% \\
\quad 10 Agents  & 47.17\% & 44.19\% & 26.92\% & 42.42\% \\
\midrule
\textit{MonoScale (Qwen-3-30B-A3B-Instruct, w/ Memory)} & & & & \\
\quad 3 Agents   & 49.06\% & 47.67\% & 26.92\% & 44.84\% \\
\quad 5 Agents   & 52.83\% & 45.35\% & 42.31\% & 47.27\%\,\textcolor{green!60!black}{\(\uparrow\)+8.3\%} \\
\quad 7 Agents   & 60.38\% & 47.67\% & 30.77\% & 49.09\%\,\textcolor{green!60!black}{\(\uparrow\)+6.6\%} \\
\quad 10 Agents  & \underline{60.38\%} & \underline{55.81\%} & \underline{42.31\%} & \underline{55.15\%}\,\textcolor{green!60!black}{\(\uparrow\)+30.0\%} \\
\bottomrule
\end{tabular}
\end{sc}
\end{footnotesize}
\end{table}

\subsection{Ablation Study}
\label{app:ablation}

To isolate the contribution of MonoScale's key design choices, we ablate two components on GAIA with the Qwen3-30B-A3B router: (i) \emph{negative memory}---the negative routing constraints distilled from \emph{failed} warm-up traces---and (ii) \emph{customized warm-up}---the agent-conditioned task synthesis. For the non-customized variant, we follow the cold-start setting of EvoRoute~\cite{zhang2026evorouteexperiencedrivenselfroutingllm} and use $50$ standardized TaskCraft~\cite{shi2025taskcraftautomatedgenerationagentic} tasks as generic warm-up data. As shown in Table~\ref{tab:ablation}, both components matter, especially at larger scales. Removing negative memory causes a sharp drop at $10$ agents ($55.2\%\rightarrow40.0\%$), indicating that explicit ``when \emph{not} to route'' constraints are essential for stable expansion. Replacing customized warm-up with generic tasks also lowers the final expanded accuracy ($55.2\%\rightarrow48.5\%$). Overall, MonoScale's gains stem from both negative routing memory and deployment-aware, agent-conditioned onboarding, rather than from additional warm-up or memory volume alone.

\begin{table}[t]
\caption{Ablation study on GAIA (overall accuracy, Qwen3-30B-A3B router). \emph{w/o Negative Memory} removes the negative routing constraints distilled from failed traces; \emph{w/o Customized Warm-up} replaces agent-conditioned task synthesis with generic TaskCraft tasks. Both components contribute, most visibly at the largest agent pool.}
\label{tab:ablation}
\centering
\begin{sc}
\begin{tabular}{lccc}
\toprule
Configuration & 5 Agents & 7 Agents & 10 Agents \\
\midrule
MonoScale (full)             & 47.3\% & 49.1\% & \textbf{55.2\%} \\
\quad w/o Negative Memory    & 45.5\% & 50.9\% & 40.0\% \\
\quad w/o Customized Warm-up & 50.9\% & 49.7\% & 48.5\% \\
\bottomrule
\end{tabular}
\end{sc}
\end{table}

\subsection{Cross-Router Memory Transfer}
\label{app:cross_router}

A natural question is whether the routing memory accumulated by MonoScale is tied to the specific router that produced it, or whether it captures router-agnostic knowledge about the agent pool. To probe this, we take the memory \emph{generated by the Qwen-3-30B-A3B-Instruct router} during expansion and apply it \emph{directly, without any adaptation}, to a different router backbone, Gemini-3-Flash, on GAIA across the same $3\rightarrow10$ agent expansion. Table~\ref{tab:cross_router} compares this transferred setting (\emph{w/ Qwen memory}) against Gemini-3-Flash running MonoScale with its own natively generated memory (\emph{w/ native memory}).

\begin{table}[t]
\caption{Cross-router memory transfer on GAIA (overall accuracy). \emph{w/ native memory}: the Gemini-3-Flash router uses memory it generated itself. \emph{w/ Qwen memory}: the Gemini-3-Flash router directly reuses the memory \emph{generated by the Qwen-3-30B-A3B-Instruct router}, with no adaptation. The $3$-agent column is the shared base pool before any expansion, so the two settings coincide there; transfer effects appear only from the first expansion ($N\ge5$).}
\label{tab:cross_router}
\centering
\begin{sc}
\begin{tabular}{lcccc}
\toprule
Setting & 3 Agents & 5 Agents & 7 Agents & 10 Agents \\
\midrule
Gemini-3-Flash, w/ native memory & 49.1\% & 49.7\% & 60.6\% & 62.4\% \\
Gemini-3-Flash, w/ Qwen memory   & 49.1\% & 53.9\% & 60.0\% & 58.8\% \\
\bottomrule
\end{tabular}
\end{sc}
\end{table}

Two observations follow. First, MonoScale remains effective on a frontier, closed-weight router: with native memory, Gemini-3-Flash improves consistently from $49.1\%$ to $62.4\%$ as the pool grows. Second, and more importantly, \emph{memory generated by the Qwen router transfers to Gemini-3-Flash without any adaptation}: the transferred setting still scales from $49.1\%$ to $58.8\%$ with no collapse, and its gap to native memory stays modest and bounded (e.g., $62.4\%$ vs.\ $58.8\%$ at $10$ agents) rather than escalating with scale. This indicates that the memory MonoScale distills encodes transferable, router-agnostic profiles of agent capabilities and failure modes, rather than router-specific artifacts.

\subsection{Scaling Beyond Ten Agents}
\label{app:beyond_ten}

Our main experiments scale the agent pool up to $N=10$. To probe behaviour at a larger scale, we further expand the HLE-MCQ system from $10$ to $20$ worker agents by adding ten Gemini-2.5-Pro-based workers, keeping the rest of the setup unchanged. Overall accuracy improves from $19.88\%$ at $10$ agents to $23.4\%$ at $20$ agents, indicating that at moderate scale, additional model capability and skill diversity still yield tangible gains and that the context window is not yet the dominant bottleneck. We nonetheless do not claim that the current prompt-based routing paradigm alone addresses true Web-scale settings: at hundreds or thousands of agents, context length becomes the primary bottleneck, and MonoScale would need to be combined with an agent-recommendation or hierarchical-scheduling stage that first retrieves a small relevant subset of agents before routing.

\subsection{Onboarding Cost}
\label{app:onboarding_cost}

Onboarding a new agent is a one-time cost. In our implementation, it requires roughly $50$ customized warm-up tasks with $4$ rollout trajectories each, costing about \$19.5 per agent under DeepSeek's official API pricing. For comparison, a single full evaluation costs about \$50, and the full HLE benchmark about \$100 (roughly $5\times$ the per-agent onboarding fee). The warm-up procedure is highly parallelizable and typically completes within $10$--$15$ minutes, so its wall-clock latency is limited. Because onboarding is performed only once per agent while the resulting agent is reused across many downstream tasks, we regard this overhead as moderate and readily amortized given the persistent gain in system capability.

\subsection{Multi-Seed Variance}
\label{app:variance}

Full multi-seed evaluation of an LLM-based MAS is expensive, so we assess variance on a randomly sampled $40$-task GAIA subset using three independent seeds. The standard deviation of overall accuracy is $2.89$, $1.44$, and $2.89$ points at $5$, $7$, and $10$ agents respectively---only about $0.5$--$1.2$ tasks out of $40$. This low variance indicates that MonoScale's gains across expansion are not an artifact of random fluctuation; accordingly, we frame the empirical trend as a stable, non-decreasing improvement rather than a strict per-point monotonicity claim.

\subsection{Error Analysis of the Local Scaling Dip}
\label{app:error_analysis}

The GAIA scaling curve in Figure~\ref{fig:scaling_trend} rises in overall trend but is not strictly monotonic at every point: accuracy shows a small local dip at $N=6$ ($47.3\%$ at $5$ agents versus $46.1\%$ at $6$, a difference of about $1.2$ points, or $2$ tasks out of the $165$-question GAIA validation set). Inspecting this dip, we attribute it predominantly to \emph{tool-budget exhaustion}. After a new agent is integrated and warm-up completes, the increased diversity of the agent pool leads the router to decompose tasks into finer-grained subtasks and distribute them more aggressively across agents. Under a fixed per-task tool-call budget, this finer decomposition raises the number of tool calls a task requires, increasing the chance of hitting the budget limit before a conclusive answer is produced and thus causing the task to fail. These cases are concentrated in complex reasoning tasks that require multi-step sequential tool use, which are inherently most sensitive to budget constraints. We trace the root cause to the warm-up stage: its current task design does not explicitly cover budget-constrained scenarios, so the collected evidence lacks negative signals about budget exhaustion and the distilled routing memory provides little guidance on avoiding over-decomposition under a limited budget. This suggests a concrete remediation within the MonoScale framework---emphasizing budget-constrained scenarios during warm-up so that familiarization captures these failure signals and encodes them into memory. Finally, this local dip does not contradict our analysis: Theorem~\ref{thm:monotone_across_k} guarantees a non-decrease of \emph{expected} performance across expansion stages, not strict monotonicity of every finite-sample evaluation point.

\subsection{Malfunctioning Agent Pool}

\label{app:additional_results_badpool}

\begin{figure}[t]
    \centering
    \includegraphics[width=0.95\linewidth]{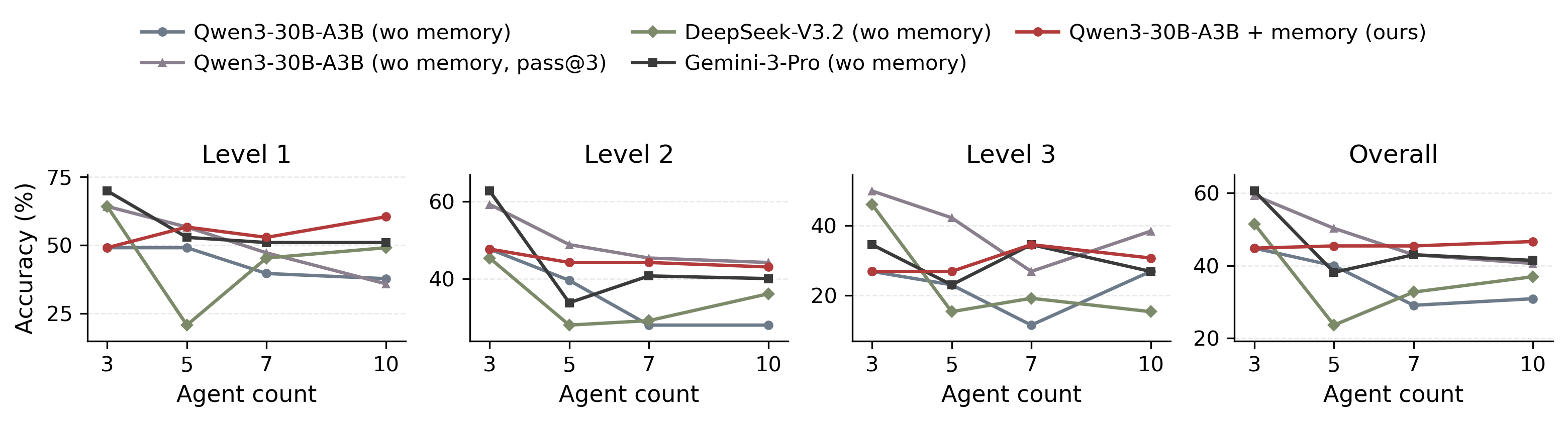}
    \caption{Additional results on the noisy/malfunctioning agent pool in GAIA. We report level-wise (L1--L3) and overall accuracy as the agent pool scales ($N=3,5,7,10$) for multiple routers and inference variants, including a best-of-$3$ aggregation (\texttt{pass@3}) for the Qwen3 (w/o memory) router.}
    \label{fig:badpool_more_models}
\end{figure}

\paragraph{Analysis.}
As shown in Fig.~\ref{fig:badpool_more_models}, when the pool contains only three high-quality worker agents with no obvious deficiencies, routers backed by stronger LMs achieve excellent overall performance on GAIA, substantially outperforming the router using Qwen-3-30B-A3B-Instruct as the backbone. However, as lower-quality workers are introduced, even a router built on the strongest backbone (e.g., Gemini-3-pro) can suffer a sharp performance degradation without warm-up; moreover, Best-of-$3$ (\texttt{pass@3}) does not consistently prevent regressions as the agent pool becomes larger and noisier. This indicates that in noisy, malfunctioning agent pools, zero-shot orchestration can amplify systematic mis-routing to unreliable workers, ultimately causing the MAS to collapse and fail to operate reliably. In contrast, after warm-up with our customized tasks, the router using Qwen-3-30B-A3B-Instruct as the backbone maintains non-decreasing performance even in the same noisy pool, and can even obtain a modest gain as more agents are added. These results underscore the necessity of using MonoScale to warm up the router in real-world Agentic Web settings where noisy agents are prevalent.

\newpage
\section{MonoScale/Expansion-Aware Memory Schema}
\label{app:memory_schema}

\subsection{Overview}

This section details the memory schema used in our MonoScale framework to solve the Sequential Augmentation and Cold-Start challenges. Unlike generic memory systems, our memory entries act as the Editable Memory $m$ in the contextual bandit formulation, allowing the Router's policy to be optimized via natural language updates.

These memories are extracted from Expansion Familiarization Traces (specifically, the successes and failures on synthesized warm-up tasks) and stored in JSON format. By capturing auditable and rollbackable routing constraints, the memory system enables safe policy updates that mitigate performance collapse during scale-up.

We establish distinct memory schemas for the \textbf{Planning} and \textbf{Coordinating} phases. These schemas include both common items—populated with insights distilled from each phase's unique operational perspective—and specific items tailored to their distinct responsibilities.

\subsubsection{Key Common Items}

\paragraph{task\_pattern}
Describes when this experience pattern applies, enabling the system to match stored patterns to new incoming tasks. Includes task characteristics (e.g., "requires cross-referencing multiple sources"), required capabilities, and environmental constraints. This drives pattern retrieval during execution.

\paragraph{actionable\_rules}
Concrete do/don't guidance extracted from comparing successful and failed attempts. These rules are prescriptive and directly injected into the Router's prompts to influence decision-making, effectively shaping the policy behavior within safe boundaries.

\paragraph{specific\_guidance}
The most detailed and structured item, containing phase-specific strategies. For \textbf{Planning}, includes agent selection logic (which agents to use/avoid) and decomposition strategies. For \textbf{Coordinating}, includes agent matching rules and error handling patterns. This is where the core learning resides.

\subsection{Planning Phase Memory}

\subsubsection{JSON Template}

\begin{promptbox}
\{ \\
\hspace*{4mm}"title": "<Pattern name>", \\
\hspace*{4mm}"task\_pattern": "<When this pattern applies>", \\
\hspace*{4mm}"available\_agent\_pool": ["<Agent 1>", "<Agent 2>", "..."], \\
\hspace*{4mm}"success\_indicators": [ \\
\hspace*{8mm}"<Observable signal indicating success>", \\
\hspace*{8mm}"..." \\
\hspace*{4mm}], \\
\hspace*{4mm}"failure\_indicators": [ \\
\hspace*{8mm}"<Observable signal indicating failure>", \\
\hspace*{8mm}"..." \\
\hspace*{4mm}], \\
\hspace*{4mm}"key\_insights": [ \\
\hspace*{8mm}"<High-level learning>", \\
\hspace*{8mm}"..." \\
\hspace*{4mm}], \\
\hspace*{4mm}"actionable\_rules": \{ \\
\hspace*{8mm}"do": ["<Recommended action>", "..."], \\
\hspace*{8mm}"dont": ["<Anti-pattern to avoid>", "..."] \\
\hspace*{4mm}\}, \\
\hspace*{4mm}"specific\_guidance": \{ \\
\hspace*{8mm}"task\_characteristics": \{ \\
\hspace*{12mm}"task\_type": "<computational | information\_gathering | reasoning | hybrid>", \\
\hspace*{12mm}"complexity\_indicators": ["<Indicator>", "..."], \\
\hspace*{12mm}"required\_capabilities": ["<Capability>", "..."] \\
\hspace*{8mm}\}, \\
\hspace*{8mm}"agent\_selection": \{ \\
\hspace*{12mm}"primary\_agents": [ \\
\hspace*{16mm}\{ \\
\hspace*{20mm}"agent": "<Agent name>", \\
\hspace*{20mm}"rationale": "<Why choose: strengths, use cases, boundaries>" \\
\hspace*{16mm}\}, \\
\hspace*{16mm}"..." \\
\hspace*{12mm}], \\
\hspace*{12mm}"agents\_to\_avoid": [ \\
\hspace*{16mm}\{ \\
\hspace*{20mm}"agent": "<Agent name>", \\
\hspace*{20mm}"rationale": "<Why avoid: limitations, failure patterns>" \\
\hspace*{16mm}\}, \\
\hspace*{16mm}"..." \\
\hspace*{12mm}], \\
\hspace*{12mm}"selection\_criteria": "<How to disambiguate between agents>" \\
\hspace*{8mm}\}, \\
\hspace*{8mm}"decomposition\_strategy": \{ \\
\hspace*{12mm}"approach": "<sequential | parallel | hybrid>", \\
\hspace*{12mm}"rationale": "<Why this approach>", \\
\hspace*{12mm}"subtask\_prompt\_guidelines": [ \\
\hspace*{16mm}"<How to phrase subtask>", \\
\hspace*{16mm}"..." \\
\hspace*{12mm}] \\
\hspace*{8mm}\}, \\
\hspace*{8mm}"dependency\_handling": "<How to manage subtask dependencies>", \\
\hspace*{8mm}"verification\_strategy": "<Whether/how to add verification>" \\
\hspace*{4mm}\}, \\
\hspace*{4mm}"importance": "<high | medium | low>", \\
\hspace*{4mm}"confidence": 0.0, \\
\hspace*{4mm}"source\_files": ["<trace\_file.md>", "..."], \\
\hspace*{4mm}"original\_task": "<Original task question>" \\
\}
\end{promptbox}

\subsubsection{Planning-Specific Items}

\paragraph{agent\_selection}
The core of Planning learning, addressing agent misassignment as the pool scales. Contains:
\begin{itemize}
    \item primary\_agents: Recommended agents with rationales covering strengths, use cases, and capability boundaries. Example: "Web Agent excels at retail searches on official domains; use for current pricing; boundary: avoid third-party aggregators."

    \item agents\_to\_avoid: Agents to exclude with rationales on limitations and failure patterns. This field facilitates the \textbf{conservative fallback} mechanism by explicitly listing scenarios where invoking the new agent is risky and should be avoided to prevent performance degradation.

    \item selection\_criteria: Disambiguation rules when multiple agents seem applicable (e.g., "Prefer Web Agent over Search Expert for retailer websites; Math Agent over Code Agent for deterministic calculations").
\end{itemize}

\paragraph{decomposition\_strategy}
How to break down tasks: approach (sequential/parallel/hybrid), rationale, and subtask prompt templates.

\paragraph{dependency\_handling}
How to manage information flow between subtasks.

\paragraph{verification\_strategy}
Whether and how to add verification subtasks.

\subsection{Coordinating Phase Memory}

\subsubsection{JSON Template}

\begin{promptbox}
\{ \\
\hspace*{4mm}"title": "<Pattern name>", \\
\hspace*{4mm}"task\_pattern": "<When this pattern applies>", \\
\hspace*{4mm}"available\_agent\_pool": ["<Agent 1>", "<Agent 2>", "..."], \\
\hspace*{4mm}"success\_indicators": [ \\
\hspace*{8mm}"<Observable signal indicating successful coordination>", \\
\hspace*{8mm}"..." \\
\hspace*{4mm}], \\
\hspace*{4mm}"failure\_indicators": [ \\
\hspace*{8mm}"<Observable signal indicating coordination failure>", \\
\hspace*{8mm}"..." \\
\hspace*{4mm}], \\
\hspace*{4mm}"key\_insights": [ \\
\hspace*{8mm}"<High-level learning about coordination>", \\
\hspace*{8mm}"..." \\
\hspace*{4mm}], \\
\hspace*{4mm}"actionable\_rules": \{ \\
\hspace*{8mm}"do": ["<Recommended coordination action>", "..."], \\
\hspace*{8mm}"dont": ["<Coordination anti-pattern>", "..."] \\
\hspace*{4mm}\}, \\
\hspace*{4mm}"specific\_guidance": \{ \\
\hspace*{8mm}"agent\_matching": \{ \\
\hspace*{12mm}"matching\_strategy": "<Subtask-to-agent mapping logic>", \\
\hspace*{12mm}"assignment\_examples": [ \\
\hspace*{16mm}\{ \\
\hspace*{20mm}"agent": "<Agent name>", \\
\hspace*{20mm}"best\_suited\_for": "<What this agent excels at>", \\
\hspace*{20mm}"avoid\_for": "<What NOT to assign>" \\
\hspace*{16mm}\}, \\
\hspace*{16mm}"..." \\
\hspace*{12mm}], \\
\hspace*{12mm}"disambiguation\_rules": "<How to choose between similar agents>", \\
\hspace*{12mm}"prompt\_transformation": "<How to adapt planner instructions>" \\
\hspace*{8mm}\}, \\
\hspace*{8mm}"execution\_flow": \{ \\
\hspace*{12mm}"parallel\_opportunities": "<Subtasks that can run concurrently>", \\
\hspace*{12mm}"critical\_path": "<Dependencies that must be sequential>" \\
\hspace*{8mm}\}, \\
\hspace*{8mm}"error\_handling": \{ \\
\hspace*{12mm}"retry\_conditions": ["<When to retry>", "..."], \\
\hspace*{12mm}"retry\_modifications": "<How to modify for retry>", \\
\hspace*{12mm}"max\_retries": 0, \\
\hspace*{12mm}"fallback\_strategy": "<Action after max retries>" \\
\hspace*{8mm}\}, \\
\hspace*{8mm}"result\_validation": \{ \\
\hspace*{12mm}"quality\_checks": ["<Check>", "..."], \\
\hspace*{12mm}"rejection\_criteria": "<When to reject and reassign>" \\
\hspace*{8mm}\}, \\
\hspace*{8mm}"information\_passing": \{ \\
\hspace*{12mm}"context\_propagation": "<How to pass info between agents>", \\
\hspace*{12mm}"context\_filtering": "<Whether/how to filter context>" \\
\hspace*{8mm}\} \\
\hspace*{4mm}\}, \\
\hspace*{4mm}"importance": "<high | medium | low>", \\
\hspace*{4mm}"confidence": 0.0, \\
\hspace*{4mm}"source\_files": ["<trace\_file.md>", "..."], \\
\hspace*{4mm}"original\_task": "<Original task question>" \\
\}
\end{promptbox}

\subsubsection{Coordinating-Specific Items}

\paragraph{agent\_matching}
The core of Coordinating decision-making, addressing precise agent selection during execution. Contains:
\begin{itemize}
    \item matching\_strategy: High-level logic for mapping subtasks to agents.

    \item assignment\_examples: Concrete examples showing what each agent excels at (best\_suited\_for) and what to avoid assigning (avoid\_for). These examples guide real-time assignment decisions.

    \item disambiguation\_rules: Rules for choosing between similar agents when multiple seem applicable.

    \item prompt\_transformation: How to adapt \textbf{Planning} instructions to agent-specific formats and capabilities.
\end{itemize}

\paragraph{execution\_flow}
Manages parallelization: parallel\_opportunities (concurrent subtasks) and critical\_path (sequential dependencies).

\paragraph{result\_validation}
Quality assurance: quality\_checks and rejection\_criteria for when to reject results and reassign.

\paragraph{information\_passing}
Context management: context\_propagation (how to pass info) and context\_filtering (when to filter/summarize).

\newpage
\section{Case Study Trajectories: The Anatomy of Scaling Failures and Their Cure}
\label{app:case_studies}

This appendix provides detailed execution trajectories to empirically validate the core motivation and effectiveness of our framework. We structure the analysis into two distinct parts:

\begin{enumerate}
    \item \textbf{The Scaling Trap (Cases 1 \& 2)}: We first demonstrate \textit{why} naive scaling fails. By comparing a stable 3-Agent baseline against a naive 10-Agent expansion, we expose two fundamental failure modes—\textbf{Cold-Start Misjudgment} and \textbf{Brittle Workflow Links}—that arise when a Router relies solely on static agent descriptions without an active familiarization phase.
    \item \textbf{The MonoScale Resolution (Case 3)}: We then demonstrate \textit{how} our method solves these issues. We explicitly compare the Naive 10-Agent system against the \textbf{MonoScale 10-Agent system}, showing how retrieved memory constraints effectively guide the router to utilize multimodal capabilities that were previously mismanaged, turning a potential failure into a success.
\end{enumerate}

\subsection{Case 1: The Trap of Cold-Start Misjudgment (Anagram Solving)}
\label{app:case1}

In this case comparison (Table \ref{tab:trajectory_case1}), we illustrate the \textbf{cold-start misjudgment} (previously referred to as capability confusion) inherent in naive scaling. This case study uses \textbf{Qwen-3-30B-Instruct} as the router backbone. The task requires generating a valid anagram from a Shakespearean quote, subject to strict constraints: (1) no punctuation, and (2) exact character rearrangement.

The \textbf{3-Agent baseline} (left) correctly routes the task. Its planning phase, operating with a smaller scope, interprets the task as a standard extraction and coding problem. The key divergence occurs in the \textbf{10-Agent Naive system} (right), where a newly added \textbf{Reasoning Agent} is available. The Router, cold-starting on this new agent, erroneously interprets the \textbf{Agent Card} description ("specializes in logical deduction and complex reasoning") as implying the capability to solve complex constraint satisfaction problems (CSPs) like anagrams. However, typical LLM-based reasoning agents excel at logical inference but often struggle with precise character-level counting due to tokenization limits.

Unaware of this \textbf{capability boundary}, the Naive Router delegates the core analysis to the Reasoning Agent during the coordinating phase. The agent, being unable to perform exact letter counts, simply \textbf{guesses the most famous quote} as the answer, leading to a failure. This highlights a critical failure mode of naive scaling: without an active warm-up phase to expose this limitation, adding the "smarter" Reasoning Agent \textbf{can actually degrade system performance}.

\begin{table*}[h!]
\centering
\small
\renewcommand{\arraystretch}{1.3}
\setlength{\tabcolsep}{4pt}
\caption{\textbf{Case Study 1: Naive Scaling Failure (Backbone: Qwen-3-30B-Instruct).} Comparison between the baseline (3 Agents) and the scaled system (10 Agents). The Router cold-starts on the "Reasoning Agent" and misjudges its ability to handle character-level constraints. The agent fails to count letters and guesses the answer.}
\label{tab:trajectory_case1}

\begin{tabularx}{0.98\textwidth}{|c|L|L|}
\hline
\textbf{Step} & \multicolumn{1}{c|}{\textbf{3-Agent Baseline (Success)}} & \multicolumn{1}{c|}{\textbf{10-Agent Naive System (Failure)}} \\
\hline

\textbf{1} & \textbf{Router (Plan)}: Decomposes into: (1) Transcribe; (2) Identify/clean phrase; (3) Solve Anagram. & 
\textbf{Router (Plan)}: Decomposes into: (1) Transcribe; (2) Analyze rules and solve. \\[1.5em]

\textbf{2} & \textbf{Router (Coord)}: Assigns Subtask 1 to \textbf{Doc Agent}. & 
\textbf{Router (Coord)}: Assigns Subtask 1 to \textbf{Multimedia Agent} (New). \\[1.5em]

\textbf{3} & \textbf{Worker}: Calls \textcolor{toolblue}{ask\_question\_about\_audio}. \newline $\rightarrow$ Output: "...queries on two fronts about how life turns rotten." & 
\textbf{Worker}: Calls \textcolor{toolblue}{audio2text}. \newline $\rightarrow$ Output: "...queries on two fronts about how life turns rotten." \\[1.5em]

\textbf{4} & \textbf{Router (Coord)}: Assigns Subtask 2 to \textbf{Doc Agent}. & 
\textbf{Router (Coord)}: Assigns Subtask 2 to \textbf{Reasoning Agent}. \newline \textit{(Selected due to "Logical Deduction" description)} \\[1.5em]

\textbf{5} & \textbf{Worker}: Calls \textcolor{toolblue}{execute\_code} (Regex). \newline $\rightarrow$ Removes punctuation. Length $\approx$ 100 chars. & 
\textbf{Worker}: Calls \textcolor{toolblue}{ask\_strong\_reasoning\_llm}. \newline $\rightarrow$ \textcolor{failred}{\textbf{Error:}} Cannot perform exact letter counting. Guesses the most famous quote "To be or not to be" directly. \\[1.5em]

\textbf{6} & \textbf{Router (Coord)}: Assigns Subtask 3 to \textbf{Reasoning Coding Agent}. & 
\\

\textbf{7} & \textbf{Worker}: Calls \textcolor{toolblue}{execute\_code}. \newline $\rightarrow$ Checks letter freq. Matches "To be or not to be..." & 
 \\

\textbf{8} & \textbf{Final Answer}: \textcolor{successgreen}{"to be or not to be that is the question..."} (Correct). & 
\textbf{Final Answer}: \textcolor{failred}{"to be or not to be that is the question"} (Incorrect; guess failed to match character count). \\[1.5em]

\hline
\end{tabularx}
\end{table*}

\clearpage

\subsection{Case 2: The Trap of Brittle Workflow Links (Bird Species ID)}
\label{app:case2}

This case (Table \ref{tab:trajectory_case2}) demonstrates the \textbf{brittle workflow links} (previously referred to as context fragmentation) and interface mismatches that arise during naive expansion. Crucially, we use \textbf{Gemini-3-Pro} as the router backbone for this case. We aim to highlight that even with a SOTA model as the router, naive expansion \textbf{may likely still result in performance collapse} due to lack of grounded execution experience.

While the \textbf{3-Agent baseline} succeeds via a generalist pattern (implicitly retaining context within a single Doc Agent), the \textbf{10-Agent Naive system} attempts to route the downloading subtask to a newly added specialized \textbf{Multimedia Agent}. The core failure stems from the Router's ignorance of the Multimedia Agent's \textbf{interface boundaries}. Specifically, while the agent utilizes a general-purpose \texttt{download\_video} tool, it lacks the specialized authentication headers required to bypass YouTube's anti-bot protections (resulting in a \texttt{403 Forbidden} error).

The Gemini Router, assuming the specialist is strictly superior for all video tasks, aggressively routes the URL to it. Crucially, due to strict schema enforcement for the specialized agent, the rich descriptive text from the search result is discarded, passing only the raw URL. When the download fails, the agent lacks the metadata context to recover. This creates a brittle dependency step: the Router blindly trusts the "Specialist" label without knowing the specific failure mode (YouTube $\rightarrow$ 403 Error), illustrating that \textbf{specialization likely creates single points of failure} when the Router lacks experience-based constraints.

\begin{table*}[h!]
\centering
\small
\renewcommand{\arraystretch}{1.3}
\setlength{\tabcolsep}{6pt}
\caption{\textbf{Case Study 2: Brittle Workflow Links (Backbone: Gemini-3-Pro).} Comparison between Baseline and Naive Scale-up. Even with a strong Gemini router, the naive expansion fails because the Router blindly delegates to a specialist (Multimedia Agent) without knowing its specific limitation (YouTube 403 Error), while also losing text context during the rigid handover.}
\label{tab:trajectory_case2}

\begin{tabularx}{0.98\textwidth}{|c|L|L|}
\hline
\textbf{Step} & \multicolumn{1}{c|}{\textbf{3-Agent Baseline (Success)}} & \multicolumn{1}{c|}{\textbf{10-Agent Naive System (Failure)}} \\
\hline

\textbf{1} & \textbf{Router (Plan)}: Decomposes into: (1) Search video \& identify species. & 
\textbf{Router (Plan)}: Decomposes into: (1) Get video URL; (2) Download \& analyze video. \\[1.5em]

\textbf{2} & \textbf{Router (Coord)}: Assigns Subtask 1 to \textbf{Doc Agent} (Generalist). & 
\textbf{Router (Coord)}: Assigns Subtask 1 to \textbf{Search Agent}. \\[1.5em]

\textbf{3} & \textbf{Worker}: Calls \textcolor{toolblue}{google\_search}. \newline $\rightarrow$ Output: URL + Page Snippet ("...Rockhopper Penguins..."). & 
\textbf{Worker}: Calls \textcolor{toolblue}{google\_search}. \newline $\rightarrow$ Output: URL only (youtube.com/...). (Schema restricted). \\[1.5em]

\textbf{4} & \textbf{Context}: \textcolor{successgreen}{Implicitly Retained.} Agent memory holds description. & 
\textbf{Router (Coord)}: Assigns Subtask 2 to \textbf{Multimedia Agent}. \newline $\rightarrow$ \textcolor{failred}{\textbf{Handover Error:}} Passes URL only; drops text context. \\[1.5em]

\textbf{5} & \textbf{Worker}: Verifies info using retained context. No complex tool needed. & 
\textbf{Worker}: Calls \textcolor{toolblue}{download\_video}. \newline $\rightarrow$ \textcolor{failred}{\textbf{Error:}} 403 Forbidden / Anti-bot. \\[1.5em]

\textbf{6} & \textbf{Router (Plan)}: Verifies results. & 
\textbf{Worker}: Lacks fallback context. Resorts to hallucination. \\[1.5em]

\textbf{7} & \textbf{Final Answer}: \textcolor{successgreen}{"Rockhopper Penguins."} (Correct). & 
\textbf{Final Answer}: \textcolor{failred}{"Emperor Penguins."} (Hallucination). \\[1.5em]

\hline
\end{tabularx}
\end{table*}

\clearpage

\subsection{Case 3: The MonoScale Resolution (Scientific Discovery)}
\label{app:case3}

In this final case (Table \ref{tab:case3_resolution}), we demonstrate \textit{how} our method resolves the scaling traps identified above. The task requires distinguishing a "Subject" (Diamond) from a "Probe" (Silicon Nanocrystals) in a scientific paper—a nuance often lost in text but clear in the paper's figures.

We explicitly compare the \textbf{Naive 10-Agent System} (which possesses all necessary tools but mismanages them) against the \textbf{MonoScale 10-Agent System} (which is guided by memory).

\begin{itemize}
    \item \textbf{Naive Failure}: The Naive system defaults to a text-centric search pattern. The Router, unaware of the ambiguity, assigns the verification subtask to the Search Agent in the coordinating phase, which merely finds more text snippets repeating the same error (\textbf{Confirmation Bias}).
    \item \textbf{MonoScale Success}: During the familiarization (warm-up) phase, the Router identified that the newly integrated \textbf{Image Agent} is capable of processing literature and specifically attending to visual figures/charts. Furthermore, the Router learned the necessary coordination patterns to enable effective collaboration between the Image Agent and existing text-based agents. In the trajectory below, the Router's planning phase explicitly retrieves this experience, enabling the system to physically ground the verification in visual data (analyzing Figure 1) rather than relying solely on text.
\end{itemize}

This case confirms that effective scaling is not just about the availability of agents, but about the \textbf{memory-guided activation} of the right modality at the right time.

\begin{table*}[h!]
\centering
\small
\renewcommand{\arraystretch}{1.3}
\setlength{\tabcolsep}{5pt}
\caption{\textbf{Case Study 3: MonoScale Resolution.} Comparison of two 10-Agent systems. The Naive system fails due to generic routing. The MonoScale system succeeds because the Router learned the Image Agent's literature-reading capability and collaboration patterns during warm-up.}
\label{tab:case3_resolution}

\begin{tabularx}{0.98\textwidth}{|c|L|L|}
\hline
\textbf{Step} & \multicolumn{1}{c|}{\textbf{Naive 10-Agent System (Failure)}} & \multicolumn{1}{c|}{\textbf{MonoScale 10-Agent System (Success)}} \\
\hline

\textbf{1} & 
\textbf{Router (Plan)}: Decomposes into standard steps: (1) Identify article; (2) Extract nano-compound; (3) Verify result. \textit{(Lacks figure constraints)} & 
\textbf{Router (Plan)}: \textcolor{successgreen}{\textbf{Retrieves Memory:}} \newline \textit{"...multimodal tools can be leveraged to analyze figures... to cross-verify..."} \newline Decomposes into: (1) Search; (2) \textbf{Analyze SEM \& PL Figures}. \\[1.5em]

\textbf{2} & 
\textbf{Router (Coord)}: Assigns to \textbf{Search Expert Agent}. & 
\textbf{Router (Coord)}: Assigns to \textbf{Search Expert Agent}. \\[1.5em]

\textbf{3} & 
\textbf{Worker}: Calls \textcolor{toolblue}{search\_serper}. \newline $\rightarrow$ Finds article. Reads text: "...investigated using silicon nanocrystals..." & 
\textbf{Worker}: Calls \textcolor{toolblue}{search\_serper}. \newline $\rightarrow$ Finds same target article. Applies strict exclusion for "plasmonics". \\[1.5em]

\textbf{4} & 
\textbf{Router (Coord)}: Assigns verification to \textbf{Search Expert Agent}. \newline \textit{(Defaulting to text-based verification)} & 
\textbf{Router (Coord)}: \textcolor{successgreen}{\textbf{Retrieves Memory:}} \newline \textit{"The Image Agent can identify images within the paper PDF."} \newline Assigns analysis to \textbf{Image Agent}. \\[1.5em]

\textbf{5} & 
\textbf{Worker}: Calls \textcolor{toolblue}{scholar\_search}. \newline $\rightarrow$ \textcolor{failred}{\textbf{Confirmation Bias:}} Snippets repeat "using SiNCs". Verifies incorrect entity. & 
\textbf{Worker}: \textbf{Image Agent} calls \textcolor{toolblue}{ask\_question\_about\_image}. \newline  \newline $\rightarrow$ \textbf{SEM Fig.1}: Shows "NCD columns" (Diamond) as structure. \newline $\rightarrow$ \textbf{PL Fig.4}: Shows SiNCs are just surface deposits. \\[1.5em]

\textbf{6} & 
\textbf{Router (Plan)}: Aggregates result "silicon nanocrystals". & 
\textbf{Router (Plan)}: Concludes "Diamond" is the primary subject based on structural dominance in SEM. \\[1.5em]

\textbf{7} & 
\textbf{Final Answer}: \textcolor{failred}{"silicon crystals"} (Incorrect). & 
\textbf{Final Answer}: \textcolor{successgreen}{"diamond"} (Correct). \\[1.5em]

\hline
\end{tabularx}
\end{table*}

\end{document}